\DeclareMathOperator*{\argmax}{arg\,max}
\DeclareMathOperator*{\argmin}{arg\,min}
\theoremstyle{thmstyleone}%
\newtheorem{theorem}{Theorem}
\newtheorem{lemma}[theorem]{Lemma}
\theoremstyle{thmstyletwo}%
\theoremstyle{thmstylethree}%
\newtheorem{assumption}{Assumption}
\begin{document}

\title[Cooperative Actor-Critic via TD Error Aggregation]{Cooperative Actor-Critic via TD Error Aggregation}


\author*[1]{\fnm{Martin} \sur{Figura}}\email{figuramartin1@gmail.com}

\author[2]{\fnm{Yixuan} \sur{Lin}}

\author[3]{\fnm{Ji} \sur{Liu}}

\author[1]{\fnm{Vijay} \sur{Gupta}}

\affil[1]{\orgdiv{Department of Electrical Engineering}, \orgname{University of Notre Dame}}

\affil[2]{\orgdiv{Department of Applied Math and Statistics}, \orgname{Stony Brook University}}

\affil[3]{\orgdiv{Department of Electrical and Computer Engineering}, \orgname{Stony Brook University}}


\abstract{In decentralized cooperative multi-agent reinforcement learning, agents can aggregate information from one another to learn policies that maximize a team-average objective function. Despite the willingness to cooperate with others, the individual agents may find direct sharing of information about their local state, reward, and value function undesirable due to privacy issues. In this work, we introduce a decentralized actor-critic algorithm with TD error aggregation that does not violate privacy issues and assumes that communication channels are subject to time delays and packet dropouts. The cost we pay for making such weak assumptions is an increased communication burden for every agent as measured by the dimension of the transmitted data. Interestingly, the communication burden is only quadratic in the graph size, which renders the algorithm applicable in large networks. We provide a convergence analysis under diminishing step size to verify that the agents maximize the team-average objective function.}

\keywords{cooperative multi-agent reinforcement learning, decentralized networks, privacy, communication latency, scalability}



\maketitle

\section{Introduction}\label{sec:introduction}
In multi-agent reinforcement learning (MARL), a network of agents interacts in a common environment \cite{littman1994}. Each agent seeks to optimize an objective function that is aligned with its interests. This paper contributes to the growing literature of cooperative MARL \cite{hernandez2019survey,lee2020survey} which assumes that all agents optimize a common objective function. Cooperative MARL has potential applications in power grid \cite{xi2018}, traffic networks \cite{bazzan2009}, robotics \cite{perrusquia2021}, and games \cite{vinyals2019}.\par
In cooperative MARL, the agents wish to train local policies that are independent of each other and conditioned only on the current state. This ensures that the decentralized policies can be readily executed at test time once the training is complete. While the policies have to be decentralized at test time, the training process itself can be both centralized and decentralized. The centralized paradigm has been dominant in the literature and led to empirical success in complex simulated environments \cite{castellini2021,lowe2017,foerster2018,rashid2018}. Centralized training, where the data from all agents are used to improve the collective decision-making, appears to be well-suited for simulated environments since every training episode of a multi-agent Markov decision process (MMDP) is typically assumed to run on a single simulator \cite{zhang2021survey}. However, centralized training is generally infeasible in real-world environments, where one has to account for the lack of observability, noisy communication between agents, and their lack of desire to share detailed information about their learning process including the local state, reward, and value function. Therefore, decentralized training emerged as an exciting possibility to optimize the policies of agents that have little knowledge about the global training process.

\subsection{Related work}
Decentralized cooperative MARL has lately received attention in the machine learning and control community. New MARL algorithms have been proposed for decentralized training under different assumptions about the knowledge bestowed upon the participating agents. In Table~\ref{table:1}, we present a sample of references with an emphasis on the observability of state and action and the communication range of the networked agents. Those assumptions are crucial in decentralized learning. In addition to the assumptions, we highlight the convergence properties of the listed MARL methods. When the global state and action are observable and agents receive local rewards, consensus-based MARL algorithms ensure that the agents maximize the team-average returns without direct reward-sharing \cite{kar2013,zhang2018,zhang2019policy,figura2021,suttle2020,lin2019}. The cooperation between the agents in consensus-based MARL is attained if the agents follow a simple paradigm, i.e., if they {\em update} local estimates of the team-average value function, {\em communicate} the local estimates to their neighbors, and {\em aggregate} the received estimates in consensus updates. Despite the popularity of consensus-based MARL algorithms in academic research, it may be difficult to identify a set of practical problems where the global state is observable by every agent. For problems where only decentralized training is feasible, it is natural to assume that each agent observes a local (belief) state and action. In this setting, simple independent learning \cite{tan1993,claus1998} does not generally facilitate cooperation, and thus the agents must establish a communication channel that enables them to coordinate their policies. In \cite{qu2020,qu2020avg}, the communication of the local critic values is assumed to take place between $k$-hop neighbors, and hence each agent updates its policy to maximize the expected returns (or average rewards) of the agents in the $k$-hop neighborhood. Unless the $k$-hop neighborhood includes all agents or the MMDP is decoupled, the entire network of agents is never fully cooperative. Thus, the policy converges merely to a compact set that includes a locally optimal policy. It is important to note that learning in this fashion is scalable even though the network may not exhibit desired performance in some problems. Then, the key question is whether the strengths of this approach can be used to establish a fully cooperative algorithm under mild assumptions on the communication between agents.

\begin{table}[t]
\begin{center}
\caption{Comparison of MARL methods for decentralized training}
\begin{tabular}{ @{} c c c c @{} } 
 \toprule
State observability & Communication range & Policy convergence & References \\ 
 \midrule
 global & immediate neighborhood & locally optimal & \cite{kar2013,zhang2018} \\
 \midrule
 $k$-hop neighborhood & $k$-hop neighborhood & compact set &  \cite{qu2020,qu2020avg} \\
 \midrule
 local & immediate neighborhood & locally optimal & our work\\
 \botrule
\end{tabular}

\label{table:1}
\end{center}
\end{table}

\subsection{Contribution}
In this work, we propose a decentralized AC algorithm with TD error aggregation for MMDPs with local state observability and decentralized rewards. Since the algorithm is designed for decentralized training, which is suited for real-world environments, we assume that the communication between agents is subject to time delays and packet dropouts. To the best of our knowledge, our proposed algorithm is the first algorithm that is provably robust against such disturbances. The algorithm is fully decentralized in the sense that the agents keep the local state, action, reward, and value function private, use arbitrary surrogate models to approximate their local actor and critic networks, and communicate on a time-varying directed graph. The cooperative nature of the agents is attributed to the communication protocol, through which the agents exchange local TD errors. We note that the communication burden of each agent using the general algorithm is only $KN$, where $K$ is the communication latency and $N$ is the number of agents. Furthermore, the cost can be reduced to $K$ for the algorithm designed for acyclic graphs. This makes the decentralized AC algorithm with TD error aggregation suitable for training large-scale networks. In the paper, we present both versions of the algorithm and provide a convergence analysis for the actor and critic updates with a diminishing step size. We prove that the team policy locally maximizes the approximated team-average objective function despite the disturbances in communication.

\section{Background}\label{sec:background}
In this section, we formally define the MARL problem. We consider a jointly observable multi-agent Markov decision process (JOMMDP) that assumes that the environment state is jointly observable but each agent on its own observes only its lower-dimensional realization (a local state). Moreover, we formulate the objective of the cooperative agents and present the policy gradient for the multi-agent AC methods that will be essential in the design of the decentralized AC algorithm with TD error aggregation in the next section.
\subsection{Jointly Observable Multi-agent Markov Decision Process}
A JOMMDP is given as a tuple $(\mathcal{S},\{\mathcal{A}^i\}_{i\in\mathcal{N}},\mathcal{P},\{\mathcal{R}^i\}_{i\in\mathcal{N}},\{\mathcal{O}^i\}_{i\in\mathcal{N}},\mathcal{G},\gamma)$, where $\mathcal{N}=\{1,\dots,N\}$ is a set of agents, $\mathcal{S} $ is a set of states, $\mathcal{P}$ is a set of transitional probabilities, $\mathcal{G}$ represents a set of graphs, and $\gamma\in[0,1)$ is a discount factor. Furthermore, $\mathcal{A}^i$, $\mathcal{R}^i$, and $\mathcal{O}^i$ are a set of actions, rewards, and states observed by agent~$i$, respectively. A time-varying graph $\mathcal{G}_t=(\mathcal{N},\mathcal{E}_t)\in\mathcal{G}$ is defined by a set of vertices $\mathcal{N}$ associated with the agents in the network and a set of edges $\mathcal{E}_t\subseteq\mathcal{N}\times\mathcal{N}$. We denote the state observed and action taken by agent~$i$ as $s^i$ and $a^i$, respectively, and write the global state and action in the vector form $s=[(s^1)^T,\dots,(s^N)^T]^T$ and $a=[(a^1)^T,\dots,(a^N)^T]^T$, respectively. We let $s^\prime$ denote the global state after a transition from $s$. The reward of agent~$i$ is given as $r^i(s,a,s^\prime):\mathcal{S}\times\mathcal{A}\times\mathcal{S}\rightarrow\mathcal{R}^i\subset\mathbb{R}$ and $p(s^\prime\vert s,a):\mathcal{S}\times\mathcal{S}\times\mathcal{A}\rightarrow\mathcal{P}\subset\mathbb{R}$ denotes the joint transitional probability. We note that the definition of the JOMMDP dynamics and rewards is general since it allows for arbitrarily strong coupling between the agents. We let $\pi^i(a^i\vert s^i):\mathcal{S}^i\times\mathcal{A}^i\rightarrow(0,1)$ denote the policy of agent~$i$. The global policy $\pi(a\vert s)=\prod_{i\in\mathcal{N}}\pi^i(a^i\vert s^i)$ induces a stationary distribution of the Markov chain associated with the JOMMDP, denoted as $d_\pi(s)$. Moreover, we let $p_\pi(s^\prime\vert s)=\sum_{a\in\mathcal{A}}p(s^\prime\vert s,a)\pi(a\vert s)$ denote the joint state transition probability induced by policy $\pi(a \vert s)$. Throughout the paper, we frequently use subscript $t$ to emphasize the dependency of signals on time, e.g., $r_{t+1}^i(s_t,a_t,s_{t+1})$. Every agent is aware only of its own action $a^i$ in addition to the local state observation $s^i$ and local reward $r^i$. Furthermore, agent~$i$ receives information from a time-varying set of agents denoted as $\mathcal{N}_{in,t}^i$ and transmits information to a time-varying set of agents denoted as $\mathcal{N}_{out,t}^i$. Finally, we make a following assumption about the communication between agents.
\begin{assumption}\label{as:communication}
A message transmitted by agent~$i$ at $t_s\in[t,t+T_1]$, for $t\geq 0$, is received by a neighboring agent $j\in\mathcal{N}_{out,t}^i$ at time $t_r\in[t_s,t_s+T_2]$.  The length of an interval, in which at least one message is successfully transmitted, $T_1$, and the maximum delay in the transmission, $T_2$, are uniformly bounded. We also assume that a message reaches every agent in the network in at most $k$ hops, where $k$ is a finite positive number. Therefore, the latency from agent~$i$ to agent $j$ is uniformly bounded by a constant $K=k\cdot(T_1+T_2)$.
\end{assumption}

\subsection{Objective}
We are ready to state the objective for cooperative agents that apply TD learning with a one-step look-ahead. Before training, each agent identifies a nominal local objective function $
J^i(\pi)=\mathbb{E}_{d_\pi}\big[V_\pi^i(s)\big]$, where $V_\pi^i(s)=\mathbb{E}_{\pi,\mathcal{P}}\big[r^i(s,a,s^\prime)+\gamma V_\pi^i(s^\prime)\vert s\big]$ is the value function. The global objective function is given as a simple average over all cooperative agents, i.e., $J(\pi)=\frac{1}{N}\sum_{i=1}^NJ^i(\pi)$. Along with this formulation, we define the team-average value function $V_\pi(s)=\frac{1}{N}\sum_{i\in\mathcal{N}}V_\pi^i(s)$ and the team-average action value function $Q_\pi(s,a)=\mathbb{E}_{\mathcal{P}}\big[\frac{1}{N}\sum_{i\in\mathcal{N}}r^i(s,a,s^\prime)+\gamma V_\pi(s^\prime)\vert s,a\big]$. It is important to note that the agents do not know the nominal local objective functions of one another, yet they collectively agree to maximize the team-average objective function $J(\pi)$. This is formally stated in the following optimization problem:
\begin{align}\label{local opt problem}
\pi_*^i=\argmax_{\pi^i} J(\pi),\quad \forall i\in\mathcal{N}.
\end{align}
We stress that all agents optimize over their local policies $\pi^i$ even though the objective function $J(\pi)$ depends on the global policy $\pi$. This is a unique challenge that typifies MARL, where an agent views all other agents as part of the environment \cite{zhang2021survey}. Our goal is to approximately solve the optimization problem in \eqref{local opt problem} using an AC method, which requires sampling of the baseline policy gradient $\nabla_\pi J(\pi)$ given as follows \cite{sutton2018book}:
\begin{align}
\nabla_\pi J(\pi)=\mathbb{E}_{\pi,d_\pi}\big[\big(Q_\pi(s,a)-V_\pi(s)\big)\nabla_\pi\log\pi(a\vert s)\big].\label{policy_gradient}
\end{align}
By the conditional independence of the individual policies $\pi^i(a^i\vert s^i)$, the local policy gradient is given as
\begin{align}
\nabla_{\pi^i} J(\pi)=\mathbb{E}_{\pi,d_\pi}\big[\big(Q_\pi(s,a)-V_\pi(s)\big)\nabla_{\pi^i}\log\pi^i(a^i \vert  s^i)\big].\label{local_policy_gradient}
\end{align}
We note that the evaluation of the local policy gradient in \eqref{local_policy_gradient} is nontrivial. The main challenge lies in the fact that the agents cannot directly take samples of the team-average advantage function, referred to as the TD error, as follows:
\begin{align}
\sum_{i\in\mathcal{N}}\big(r^i(s,a,s^\prime)+\gamma V_\pi^i(s^\prime)-V^i_\pi(s)\big)\sim Q_\pi(s,a)-V_\pi(s).\label{TD_error_agg}
\end{align}
This is due to the private and spatially distributed local signals. A further challenge lies in the estimation of the local value function $V_\pi^i(s)$, which must be based on the lower-dimensional local observation $s^i$ since the global state $s$ is not observable. In the next section, we introduce a multi-agent AC method that involves local estimation, communication on time-varying directed graphs, and TD error aggregation that facilitates approximate sampling of the local policy gradient in \eqref{local_policy_gradient}. The method is designed for real-world environments, where the agents wish to retain privacy and the communication between them is subject to time delays and packet dropouts.

\section{Methodology}\label{sec:methodology}
In this section, we present a decentralized AC method with TD error aggregation that addresses the challenges presented in Section~\ref{sec:background}. It involves the transmission of TD errors over the communication graph as well as the aggregation of the TD errors by every agent. The decentralized AC method with TD error aggregation assumes delayed, yet synchronized actor updates. In addition to the general online algorithm in Section~\ref{subsec:DAC-TD}, we also present its communication-efficient version designed for acyclic communication graphs in Section~\ref{subsec:DAC-TD2}.

\subsection{Decentralized actor-critic method with TD error aggregation}\label{subsec:DAC-TD}
In this section, we introduce a method that would allow the agents to find a locally optimal cooperative policy. In Section~\ref{sec:background}, particularly in \eqref{policy_gradient} and \eqref{TD_error_agg}, we established that the true team-average policy gradient includes the local value function $V_\pi^i(s)$ which we refer to as the critic. It is important to note that the critic $V_\pi^i(s)$ is a function of the global state~$s$ that is generally not observable by agent~$i$. Furthermore, we assume that the state space $\mathcal{S}$ is large, and thus storing values of $V_\pi^i(s)$ for all $s\in\mathcal{S}$ is not admissible. Therefore, we consider an approximation of $V_\pi^i(s)$ via a parametric model $V^i(s^i;v^i)$, where $v^i$ are the model parameters. Such approximation is scalable in the sense that the dimensionality of $s^i$ and $v^i$ does not grow with the number of agents in the network. Hence, the model complexity does not grow with the size of the network either. We apply the same arguments to introduce a parametric model $\pi^i(a^i\vert s^i;\theta^i)$ that approximates the policy (actor). We write the global policy as $\pi_\theta=\pi(a\vert s;\theta)=\prod_{i\in\mathcal{N}}\pi^i(a^i\vert s^i;\theta^i)$, where $\theta=[(\theta^1)^T,\dots,(\theta^N)^T]^T$.\par
While the ultimate goal of the cooperative agents is to learn a cooperative policy through the iteration
\begin{align}
    \theta_{t+1}^i=\theta_t^i+\alpha_t\cdot\tilde\nabla_{\theta^i}J(\pi_{\theta_t}),\label{actor_update}
\end{align}
where $\alpha_t$ is the actor learning rate, the actor and critic networks are intimately connected in the training process as the agents need to estimate the policy gradient before they apply it in the local actor update. The process of learning the critic $V^i(s^i;v^i)$ is referred to as the policy evaluation. In the policy evaluation, each agent solves a local optimization problem given as follows:
\begin{align}
v^i_\pi=\argmin_{v^i}\mathbb{E}_{d_\pi}\big[\mathbb{E}_{\pi,\mathcal{P}}\big(r^i(s,a,s^\prime)+\gamma V^i(s^{i\prime};v^i)-V^i(s^i;v^i)\big)^2\big].\label{policy_evaluation}
\end{align}
Solving this mean squared error estimation problem is straightforward because all local signals are available to agent~$i$. Having defined an optimization problem for the policy evaluation, we move onto the central question in our work. \textit{How do we ensure that the agents cooperate?}\par
Our solution is relatively simple and easily fits into the setting of decentralized networks with communication delays. We propose that the agents aggregate TD errors $\delta^i=r^i(s,a,s^\prime)+\gamma V^i(s^{i\prime};v^i)-V^i(s^i;v^i)$ received from their neighbors, and thus they obtain samples of the team-average advantage function $Q_\pi(s,a)-V_\pi(s)$ according to \eqref{TD_error_agg}. However, it is important to note that the actor updates defined in \eqref{actor_update} cannot be executed online since the TD errors $\delta^i$ are not propagated across the entire network instantly under Assumption~\ref{as:communication}. Therefore, we let the agents exchange $\delta^i$ and synchronously update the local actor parameters $\theta^i$ once all local TD errors have been received by every agent in the network. This leads to an online actor update in the form $\theta_{t+1}^i=\theta_t^i+\alpha_{t-K}\cdot\tilde\nabla_{\theta^i}J(\pi_{\theta_{t-K}})$, where $K$ is the maximum number of steps for data to be received by every agent in the network (see Assumption~\ref{as:communication}). The propagation of the local TD errors in a latent network is depicted in Fig.~\ref{fig:TD communication}. To successfully deliver TD errors across the network that experiences time delays and packet dropouts in communication according to Assumption~\ref{as:communication}, we need to carefully design the communication protocol. We define a vector $\Delta_{t-\tau}^i\in\mathbb{R}^N$ as a collection of local TD errors that were observed by individual agents at time $t-\tau$ and have been received by agent~$i$. The entire history of TD errors maintained by agent~$i$ entails a set of such vectors, $\{\Delta_{t-\tau}^i\}_{\tau\in[0,K]}$. We note that the history of TD errors is incomplete due to the network latency, and thus it has to be incrementally updated based on new information provided by the agent's neighbors. In the following lemma, we show that each agent can evaluate the team-average TD error with a delay as long as all agents transmit the history of TD errors $\{\Delta_{t-\tau}^i\}_{\tau\in[0,K]}$ at every time step.
\begin{figure}[t]
\centering
\includegraphics[width=1\linewidth]{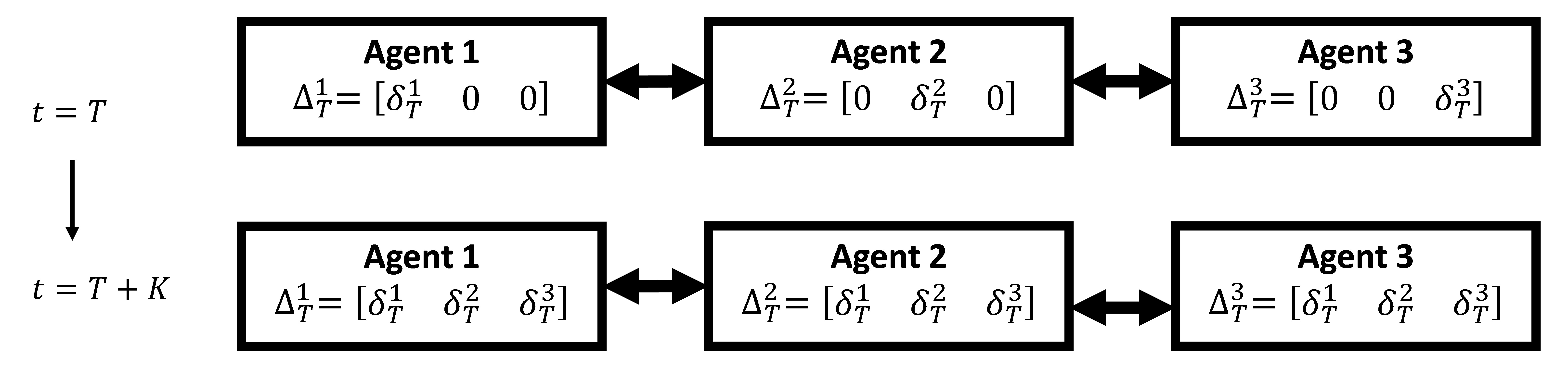}
\caption{The propagation of TD error vectors $\Delta_T$ in a simple network for $t\in[T,T+K]$.}
\label{fig:TD communication}
\end{figure}
\begin{lemma}\label{lem:TD aggregation}
Suppose that Assumption~\ref{as:communication} holds. For  $t\geq 0$, each agent initializes a vector $\Delta_t^i$ such that $[\Delta_t^i]_j=\delta_t^i$ for $i=j$ and $[\Delta_t^i]_j=0$ otherwise. Furthermore, each agent transmits a set of vectors $\{\Delta_{t-\tau}^i\}_{\tau\in[0,K-1]}$, and updates the zero entries of each vector in the set with the received histories $\{\Delta_{t-\tau}^j\}_{\tau\in[1,K]}$, $j\in\mathcal{N}_{in,t}^i$. Then, the team-average TD error satisfies $\delta_{t-K}=\frac{1}{N}\mathbf{1}^T\Delta_{t-K}^i$ for $i\in\mathcal{N}$ and $t\geq K$.
\end{lemma}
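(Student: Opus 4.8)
The plan is to show that the stated identity is equivalent to the claim that, at the moment it is read out, the vector $\Delta_{t-K}^i$ has every coordinate correctly filled, i.e. $[\Delta_{t-K}^i]_j=\delta_{t-K}^j$ for all $i,j\in\mathcal{N}$. Once this is established the conclusion is immediate, since $\frac{1}{N}\mathbf{1}^T\Delta_{t-K}^i=\frac{1}{N}\sum_{j\in\mathcal{N}}[\Delta_{t-K}^i]_j=\frac{1}{N}\sum_{j\in\mathcal{N}}\delta_{t-K}^j=\delta_{t-K}$. Thus the whole burden is a completeness-and-correctness claim for a fixed observation time; I would fix $s:=t-K$ and track the single family $\{\Delta_s^i\}_{i\in\mathcal{N}}$ as the absolute clock advances from $s$ to $s+K$.

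First I would isolate an invariant maintained by the update rule: at every step each entry $[\Delta_s^i]_j$ is either $0$ or exactly the true value $\delta_s^j$. This holds at initialization (agent $j$ sets its own coordinate to $\delta_s^j$ and all others to $0$) and is preserved, because the rule overwrites only \emph{zero} coordinates, and does so by copying coordinates from neighbors' vectors, which are themselves $0$ or the true $\delta_s^j$ by the induction hypothesis. Two consequences follow: correct nonzero values, once written, are never corrupted, and no spurious nonzero value can ever appear. This reduces correctness to completeness, with one caveat I would flag explicitly: a genuinely zero TD error $\delta_s^j=0$ is indistinguishable from an unfilled default $0$, but this ambiguity is harmless for the readout, since both the true value and the default contribute $0$ to $\mathbf{1}^T\Delta_s^i$.

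Next I would prove completeness through a reachability argument driven by Assumption~\ref{as:communication}. Fix $i,j$. By assumption there is a route from $j$ to $i$ of at most $k$ hops, each hop succeeding within an interval of length at most $T_1+T_2$. Arguing by induction on position along this route: $\delta_s^j$ is present at $j$ from time $s$; if it is present at the $m$-th agent by time $s+m(T_1+T_2)$, then that agent broadcasts its vector (observation time $s$ lies in the transmit range $[t-K+1,t]$ precisely while $t\in[s,s+K-1]$), and the next agent fills its zero coordinate on reception, so $\delta_s^j$ reaches the $(m+1)$-th agent by time $s+(m+1)(T_1+T_2)$. After at most $k$ hops, $\delta_s^j$ reaches $i$ by time $s+k(T_1+T_2)=s+K$, and the received-history range $[t-K,t-1]$ admits observation time $s$ exactly while $t\in[s+1,s+K]$, so the final integration can occur at $t=s+K$. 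Hence when $\Delta_{t-K}^i=\Delta_s^i$ is read out at $t=s+K$, every coordinate is filled, and the equivalence from the first step closes the argument.

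The step I expect to be the main obstacle is the careful bookkeeping of the two timing windows: verifying that the transmit range $\tau\in[0,K-1]$ and the received-history range $\tau\in[1,K]$ overlap in exactly the way needed for the $k$-hop propagation to fit within the $K$-step budget without an off-by-one gap, and confirming that the per-hop latency $T_1+T_2$ compounded $k$ times never exceeds $K=k\cdot(T_1+T_2)$ on any admissible route. The invariant and reachability steps themselves are routine once the indexing and the within-step ordering of reception versus readout are pinned down.
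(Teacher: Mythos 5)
Your proposal is correct and follows essentially the same route as the paper's proof: fix the observation time, track the single family of vectors $\Delta_{t-K}^i$, and argue via Assumption~\ref{as:communication} that every coordinate is filled within $K$ steps, so the readout $\frac{1}{N}\mathbf{1}^T\Delta_{t-K}^i$ equals the team average. Your version is in fact more careful than the paper's two-line argument — in particular the explicit 0-or-correct invariant, the hop-by-hop induction, and the observation that a genuinely zero $\delta_{t-K}^j$ is indistinguishable from an unfilled entry yet harmless to the sum are all details the paper leaves implicit.
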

\begin{proof}
For simplicity, we consider updates of a single vector $\Delta_{t}^i$ rather than of the entire set. Agent~$i$ initializes $\Delta_{t}^i=\mathbf{0}$ and assigns $[\Delta_{t}^i]_i=\delta_{t}^i$ at time $t=T$. Then, the zero entries of $\Delta_T^i$ are updated upon the reception of $\Delta_T^j$ for $t\in[T,T+K]$. By Assumption~\ref{as:communication}, this update rule yields $[\Delta_T^i]_j=\delta_T^j$ for all $j\in\mathcal{N}$ at $t\in[T,T+K]$. Therefore, the team-average TD error can be evaluated as $\delta_T=\frac{1}{N}\mathbf{1}^T\Delta_T^i$ at time $t=T+K$, which concludes the proof.
\end{proof}
\noindent
The propagation and aggregation of TD errors validated in Lemma~\ref{lem:TD aggregation} are essential steps in the decentralized AC algorithm with TD error aggregation. A pseudo-code for the online algorithm is presented in Algorithm~\ref{alg:vanilla}. In the next subsection, we present an extension of Algorithm~\ref{alg:vanilla} for acyclic graphs that reduces the communication burden imposed on the agents. The convergence analysis of both algorithms is provided in Section~\ref{sec:convergence}.

\begin{algorithm2e}[h]
\label{alg:vanilla}
\caption{Decentralized AC algorithm with TD error aggregation}
 \textbf{Initialize parameters} $\theta_0^i,v_0^i$, $\forall i\in\mathcal{N}$\\
 \textbf{Initialize} $K$, $s_0,\{\alpha_{t-K}\}_{t\geq K},\{\beta_{t}\}_{t\geq 0}$, $\{\Delta^i_{t-\tau}\}_{\tau=0:K}\leftarrow\mathbf{0}$, $\{\eta_{t-\tau}^i\}_{\tau=0:K}\leftarrow\mathbf{0}$ $\forall i\in\mathcal{N}$, $t\leftarrow 0$\\
\textbf{Repeat until convergence} \\
 \For{$i\in\mathcal{N}$}
 	 {
 	   \textbf{Take action} $a_{t}^i\sim\pi^i(a_{t}^i \vert s_{t}^i;\theta_{t}^i)$\\
 	   \textbf{Observe} $s_{t+1}^i$ and $r_{t+1}^i$\\
 	   \textbf{Receive} $\{\Delta_{t-\tau}^j\}_{\tau=1:K}$ from $j\in\mathcal{N}_{in,t}^i$\\
  	   \textbf{Compute} $\delta_t^i\leftarrow r_{t+1}^i+\gamma V^i(s_{t+1}^i;v_t^i)-V^i(s_t^i;v_t^i)$\\
  	   \textbf{Update critic}	$v_{t+1}^i\leftarrow v_t^i+\beta_{t}\cdot\delta_t^i\cdot\nabla_{v^i} V(s_t;v_t^i)$\\	
  	   \textbf{Compute} $\eta_t^i\leftarrow\nabla_{\theta^i}\log\pi^i(a_{t}^i \vert s_{t}^i;\theta_{t}^i)$\\
  	   \textbf{Update} $\{\eta_{t-\tau}^i\}_{\tau=0:K}$\\
  	   \textbf{Initialize TD error vector} $\big[\Delta_{t}^i\big]_i\leftarrow \delta_t^i$\\
  	   \textbf{Send} $\{\Delta_{t-\tau}^i\}_{\tau=0:K-1}$ to $\mathcal{N}_{out,t}^i$\\
  	   \textbf{Update} (For $\tau=1:K$ and $n=1:N$) $\big[\Delta_{t-\tau}^i\big]_n\leftarrow \big[\Delta_{t-\tau}^j\big]_n \text{ if }\exists j\in\mathcal{N}_{in,t}^i;\, \big[\Delta_{t-k}^j\big]_n\neq 0$\\
  		\textbf{Compute} $\delta_{t-K}\leftarrow \frac{1}{N}\mathbf{1}^T\Delta_{t-K}^i$\\
  		\textbf{Update actor}	$\theta_{t+1}^i\leftarrow\theta_t^i+\alpha_{t-K}\cdot\delta_{t-K}\cdot\eta_{t-K}^i$
  		}
  	\textbf{Update} iteration counter $t\leftarrow t+1$
\end{algorithm2e}

\subsection{Extension for acyclic graphs}\label{subsec:DAC-TD2}
In this section, we introduce a special version of Algorithm~\ref{alg:vanilla} tailored for acyclic time-invariant communication graphs with fixed delays and no packet dropouts.
For simplicity, we assume that the fixed delays are equal to one time step, hence Assumption~\ref{as:communication} holds with $T_1=0$ and $T_2=1$. Our goal is to reduce the size of messages communicated between agents and perform actor and critic updates that are identical to the updates in Algorithm~\ref{alg:vanilla}. Under this algorithm, the agents are required to communicate a $K$-dimensional message to their neighbors as opposed to the $KN$-dimensional message under Algorithm~\ref{alg:vanilla}.  The extended algorithm achieves a message size independent of $N$ since the agents perform algebraic operations on the received data in an effort to evaluate the team-average TD error in finite time. This is very different from Algorithm~\ref{alg:vanilla}, where the received TD errors are directly communicated between the neighbors until they are finally aggregated by each agent after $K$ time steps.\par
We define a vector of team-average TD errors evaluated by agent~$i$, $\hat\Delta_t^i\in\mathbb{R}^K$. This vector includes entries associated with the team-average TD errors observed over the last $K$ time steps. The decentralized AC algorithm with TD error aggregation designed for acyclic graphs is given in Algorithm~\ref{alg:acyclic}. The agents update $\hat\Delta_t^i$ to retrieve the team-average TD error $\delta_{t-K}$ from $\hat\Delta_t^i$ at every time step. They use latent variables $\rho_t^i$ and $z_t^i$ to keep track of the data sent to their neighbors.

\begin{algorithm2e}[h]
\label{alg:acyclic}
\caption{Decentralized AC algorithm with TD error aggregation for acyclic graphs}
 \textbf{Initialize parameters} $\theta_0^i,v_0^i$, $\forall i\in\mathcal{N}$\\
 \textbf{Initialize} $k$, $s_0,\{\alpha_{t-K}\}_{t\geq K},\{\beta_{t}\}_{t\geq 0}$, $\{\eta_{t-\tau}^i\}_{\tau=0:K}\leftarrow\mathbf{0}$ $\forall i\in\mathcal{N}$, $t\leftarrow 0$\\
 	 $\qquad\qquad\rho_{t}^i\leftarrow \mathbf{0}$, $\hat\Delta_{t}^i\leftarrow \mathbf{0}$, $z_{t}^{ij}\leftarrow\mathbf{0}$ for $t\leq 0$ and $\quad\forall i,j\in\mathcal{N}$\\
\textbf{Repeat until convergence} \\
 \For{$i\in\mathcal{N}$}
 	{
    \textbf{Take action} $a_{t}^i\sim\pi^i(a_{t}^i \vert s_{t}^i;\theta_{t}^i)$\\
    \textbf{Observe} $s_{t+1}^i$ and $r_{t+1}^i$\\
    \textbf{Receive} $\rho_{t-1}^j$ from $j\in\mathcal{N}_{in,t}^i$\\
    \textbf{Compute} $\delta_t^i\leftarrow r_{t+1}^i+\gamma V^i(s_{t+1}^i;v_t^i)-V^i(s_t^i;v_t^i)$\\
    \textbf{Update critic}	$v_{t+1}^i\leftarrow v_t^i+\beta_{t}\cdot\delta_t^i\cdot\nabla_{v^i} V(s_t;v_t^i)$\\
    \textbf{Update} $[\hat\Delta_t^i]_1, [\rho_{t}^i]_1, [z_{t}^{ij}]_1\leftarrow \delta_t^i$\\
    \For{$\tau=1:K-1$}
      {
        $[\hat\Delta_t^i]_{\tau+1}\leftarrow [\hat\Delta_{t-1}^i]_\tau+\sum_{j\in\mathcal{N}_{in,t}^i}([\rho_{t-1}^j]_\tau-[z^{ij}_{t-1}]_\tau)$\\
        $[\rho_t^i]_{\tau+1}\leftarrow [\hat\Delta_{t}^i]_{\tau+1}-[\hat\Delta_{t-1}^i]_\tau$\\
        $[z_{t+1}^{ij}]_{\tau+1}\leftarrow [z_{t-1}^{ij}]_{\tau-1}+[\rho_{t}^i]_{\tau+1}-[\rho_{t-1}^j]_\tau$
      }
    \textbf{Send} $\rho_t^i$ to $\mathcal{N}_{out,t}^i$\\
    \textbf{Compute} $\eta_t^i\leftarrow\nabla_{\theta^i}\log\pi^i(a_{t}^i \vert s_{t}^i;\theta_{t}^i)$\\
    \textbf{Update} $\{\eta_{t-\tau}^i\}_{\tau=0:K}$\\
    \textbf{Compute} $\delta_{t-K}=[\hat\Delta^i_t]_K\cdot N^{-1}$\\
    \textbf{Update actor} $\theta_{t+1}^i\leftarrow\theta_t^i+\alpha_{t-K}\cdot\delta_{t-K}\cdot\eta_{t-K}^i$
  	}
  	\textbf{Update} iteration counter $t\leftarrow t+1$
\end{algorithm2e}

In the following lemma, we show that the updates of  Algorithm~\ref{alg:acyclic} yield the team-average TD error $\delta_{t-K}$ for $t\geq K$. The proof of the lemma is relegated to Appendix~\ref{appendix:extension}.

\begin{lemma}\label{lem:TD aggregation2}
For all $i\in\mathcal{N}$ and $t\geq K$, $[\hat\Delta_t^i]_K\cdot N^{-1}=\delta_{t-K}$.
\end{lemma}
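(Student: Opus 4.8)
The plan is to prove a loop invariant on the partial vector $\hat\Delta_t^i$ by induction on the coordinate index $\tau$, simultaneously over all agents $i\in\mathcal{N}$ and all times $t$. For $1\le\tau\le K$, let $B_{\tau-1}(i)\subseteq\mathcal{N}$ be the set of agents that can reach agent~$i$ in at most $\tau-1$ hops along the time-invariant, acyclic communication graph, with $B_0(i)=\{i\}$. The invariant I would carry is that $[\hat\Delta_t^i]_\tau$ equals the aggregate of the TD errors of exactly the agents in $B_{\tau-1}(i)$, all from a common past time step and each counted once. The base case $\tau=1$ is immediate from the initialization $[\hat\Delta_t^i]_1\leftarrow\delta_t^i$, since $B_0(i)=\{i\}$. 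For the inductive step I would substitute the update $[\hat\Delta_t^i]_{\tau+1}\leftarrow[\hat\Delta_{t-1}^i]_\tau+\sum_{j\in\mathcal{N}_{in,t}^i}\big([\rho_{t-1}^j]_\tau-[z_{t-1}^{ij}]_\tau\big)$, rewrite $[\hat\Delta_{t-1}^i]_\tau$ via the induction hypothesis (which already sums over $B_{\tau-1}(i)$), and argue that the neighbor terms contribute precisely the TD errors of the agents newly reachable at exactly $\tau$ hops, i.e.\ those in $B_\tau(i)\setminus B_{\tau-1}(i)$.

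The cancellation hidden in the neighbor sum is the crux, and the main obstacle is to characterize $\rho$ and $z$ precisely enough to close the induction. I would introduce a companion invariant: $[\rho_t^i]_\tau$ is the increment $[\hat\Delta_t^i]_\tau-[\hat\Delta_{t-1}^i]_{\tau-1}$, i.e.\ the newly aggregated mass that agent~$i$ forwards at hop level $\tau$, while $[z_t^{ij}]_\tau$ records the portion of that mass that originated on $i$'s own side of the edge $(i,j)$ and has already been pushed toward $j$. Proving these jointly, using the recursions for $[\rho_t^i]_{\tau+1}$ and $[z_{t+1}^{ij}]_{\tau+1}$, would show that $[\rho_{t-1}^j]_\tau-[z_{t-1}^{ij}]_\tau$ is exactly the aggregate contributed by the branch of the graph lying beyond neighbor~$j$, so that adding it to $[\hat\Delta_{t-1}^i]_\tau$ extends the coverage from $B_{\tau-1}(i)$ to $B_\tau(i)$ with neither omission nor duplication. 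Acyclicity is what makes this work: a unique path between any two agents forces the branches beyond distinct in-neighbors to be disjoint, so summing over $j\in\mathcal{N}_{in,t}^i$ never double-counts, and subtracting $z$ removes precisely the back-flow that a cyclic graph would otherwise feed around a loop.

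Finally I would invoke Assumption~\ref{as:communication}: with unit delays and a latency horizon of $K$ hops, every agent lies within the $(K-1)$-hop neighborhood of every other, so $B_{K-1}(i)=\mathcal{N}$ for all $i$. Evaluating the invariant at $\tau=K$ then makes $[\hat\Delta_t^i]_K$ the full team sum of a single past TD error, namely $[\hat\Delta_t^i]_K=\sum_{n\in\mathcal{N}}\delta_{t-K}^n=N\,\delta_{t-K}$; dividing by $N$ gives $[\hat\Delta_t^i]_K\cdot N^{-1}=\delta_{t-K}$ for every $i\in\mathcal{N}$ and $t\ge K$, as claimed. I expect the joint bookkeeping of $\rho$ and $z$, rather than the base case or the final coverage step, to be the delicate part of the argument, in contrast to the simpler direct-propagation reasoning already used for Algorithm~\ref{alg:vanilla} in Lemma~\ref{lem:TD aggregation}.
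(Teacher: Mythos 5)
Your proposal is correct and follows essentially the same route as the paper's proof: a joint induction on the hop level $\tau$ with the invariant that the $\tau$-th entry of $\hat\Delta$ aggregates exactly the TD errors of the agents within $\tau-1$ hops, a companion invariant identifying $[\rho^j]_\tau-[z^{ij}]_\tau$ as the contribution of the branch beyond neighbor $j$, acyclicity guaranteeing disjointness of those branches, and the $K$-hop coverage assumption closing the argument at index $K$. The paper simply makes your qualitative bookkeeping explicit via the closed-form set identities $x_\tau^i=\sum_{0\le k\le\tau}\sum_{l\in\mathcal{N}^i_k}\delta_t^l$ and $\hat z_\tau^{ij}=\sum_{l\in\mathcal{N}^i_\tau\setminus\mathcal{N}^j_{\tau-1}}\delta_t^l$.
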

\noindent
In the next section, we provide a convergence analysis of Algorithm~\ref{alg:vanilla} and specify the conditions under which the convergence results apply to Algorithm~\ref{alg:acyclic} as well.

\section{Convergence analysis}\label{sec:convergence}
In this subsection, we provide a convergence analysis for the actor and critic updates in Algorithm~\ref{alg:vanilla}. The purpose of the analysis is to show that the agents are fully cooperative in the sense that they maximize an approximation of the team-average objective function~$J(\pi_\theta)$. We use the two timescale principle that allows us to separate the actor and critic updates. First, we prove that the local critic parameters $v_t^i$ asymptotically converge with probability one to the minimizer of \eqref{policy_evaluation}. Second, we prove that the actor parameters $\theta_t^i$ asymptotically converge to a compact set of equilibria that locally maximize the team-average objective function $J(\pi_\theta)$. The proofs of the convergence theorems are provided in Appendix~\ref{appendix:proofs}.\par
We define matrices $P_\pi=[p_\pi(s^\prime \vert s)]_{s^\prime,s}\in\mathbb{R}^{ \vert\mathcal{S}\vert\times\vert\mathcal{S}\vert}$ and $D_\pi=[d_\pi(s)]_{s,s}\in\mathbb{R}^{\vert \mathcal{S} \vert \times \vert \mathcal{S} \vert}$ as well as vectors $\hat{R}_\pi^i=\big[\mathbb{E}_{\pi,\mathcal{P}}\big(r^i(s,a,s^\prime)\big)\big]_{s}\in\mathbb{R}^{\vert\mathcal{S}\vert}$ for $i\in\mathcal{N}$. We make the following assumptions.

\begin{assumption}\label{as:linear_approx}
The local critic of agent~$i$, $V^i(s^i;v^i)$, is approximated by a linear model, i.e., $V^i(s^i;v^i) = (v^i)^T\phi^i(s^i)$, where $\phi^i(s^i)=[\phi_1^i(s^i),\dots,\phi_{L^i}^i(s^i)]^T\in\mathbb{R}^{L^i}$ is a uniformly bounded feature vector. We define the local feature matrix $\Phi^i=[\phi_l^i(s^i)^T]_{s,l}\in\mathbb{R}^{\vert\mathcal{S}\vert\times L^i}$. The local feature matrix $\Phi^i$ has full column rank.
\end{assumption}

\begin{assumption}\label{as:policy}
The local actor of agent~$i$, $\pi^i(a^i \vert s^i;\theta^i)$ is stochastic, i.e., $\pi(a^i \vert s^i;\theta^i)>0$ for any $\theta^i\in\Theta^i$, $s^i\in\mathcal{S}^i$, $a^i\in\mathcal{A}^i$, and continuously differentiable in $\theta^i$. The Markov chain $\{s_t\}_{t\geq 0}$ is irreducible and aperiodic under any joint policy $\pi_\theta$.
\end{assumption}

\begin{assumption}\label{as:reward_bound}
The reward $r_{t+1}^i(s_t,a_t,s_{t+1})$ is uniformly bounded for any $i\in\mathcal{N}$ and $t\geq 0$.
\end{assumption}

\begin{assumption}\label{as:step_size}
The critic step size $\beta_{t}$  satisfies $\sum_t\beta_{t}=\infty$, and $\beta_{t}\rightarrow 0$,  $\sum_t \beta^2_{t}<\infty$. The actor step size $\alpha_{t}$ satisfies $\sum_t\alpha_{t}=\infty$, $\alpha_{t}\rightarrow 0$, $\sum_t \alpha^2_{t}<\infty$, and $\alpha_{t}=o(\beta_{t})$.
\end{assumption}

\begin{assumption}\label{as:policy_updates}
The update of the actor parameters $\theta^i$ includes a projection operator $\Psi_{\Theta^i}:\mathbb{R}^{m_i}\rightarrow\Theta^i\subset\mathbb{R}^{m_i}$, where $\Theta^i$ is a hyper-rectangular constraint set.
\end{assumption}

\noindent
These assumptions are standard in the RL literature \cite{zhang2018}. We assume a linear approximation of the critic and the associated feature matrix $\Phi^i$ to be full rank so as to ensure convergence to a unique asymptotically stable equilibrium in the policy evaluation \cite{tsitsiklis1999}. The second assumption admits a nonlinear approximation of the actor \cite{konda2000,bhatnagar2009}. This and the assumption that the rewards remain uniformly bounded ensure bounded actor and critic gradients. Assumption~\ref{as:step_size} is a sufficient condition for the convergence of a stochastic gradient descent type of algorithm  that we use for the actor and critic updates \cite{kushner2003book}. It also allows to apply the two-timescale principle in our convergence analysis, since the learning rate $\beta_t$ decays slower than $\alpha_t$ \cite{borkar2009book}. Finally, the actor parameter updates are subject to a constraint that prevents them from growing unbounded. This simplifies the analysis of the actor convergence. We are ready to present a convergence theorem for the policy evaluation.

\begin{theorem}\label{thm: critic}
Under Assumption~\ref{as:policy}-\ref{as:step_size} and fixed policy $\pi(a \vert s;\theta)$, the critic parameters $v_t^i$ are uniformly bounded for $t\geq0$ and converge to a fixed point $v_\pi^i$ with probability one for all $i\in\mathcal{N}$. The fixed point $v_\pi^i$ is a unique asymptotically stable equilibrium of the ODE
\begin{align}\label{critic ODE}
\dot{v}^i=\Phi^{iT}D_\pi(\gamma P_\pi-I)\Phi^i v^i+\Phi^{iT}D_\pi\hat{R}_\pi^i.
\end{align}
\end{theorem}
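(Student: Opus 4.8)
The plan is to recognize that, for a fixed policy $\pi(a\vert s;\theta)$, the $N$ critic recursions decouple: agent~$i$'s update depends only on its own parameters, features, and reward, so the analysis reduces to the single-agent case. Under the linear parametrization of Assumption~\ref{as:linear_approx} we have $\nabla_{v^i}V^i(s^i;v^i)=\phi^i(s^i)$, and the critic update of agent~$i$ becomes the classical linear TD(0) recursion
\begin{align}
v_{t+1}^i=v_t^i+\beta_t\big(r_{t+1}^i+\gamma(v_t^i)^T\phi^i(s_{t+1}^i)-(v_t^i)^T\phi^i(s_t^i)\big)\phi^i(s_t^i).\nonumber
\end{align}
I would therefore cast the problem in the stochastic approximation framework and aim to invoke the classical convergence theory for TD learning with linear function approximation \cite{tsitsiklis1999,borkar2009book}.

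First I would write the recursion as $v_{t+1}^i=v_t^i+\beta_t\big(h^i(v_t^i)+M_{t+1}^i\big)$, where the mean field $h^i(v^i)=\mathbb{E}_{d_\pi,\pi,\mathcal{P}}\big[\delta^i\phi^i(s^i)\big]$ is the expected update direction under the stationary distribution. Evaluating this expectation in matrix form, the reward term yields $\Phi^{iT}D_\pi\hat R_\pi^i$ and the bootstrapping terms yield $\Phi^{iT}D_\pi(\gamma P_\pi-I)\Phi^i v^i$, so $h^i(v^i)$ coincides exactly with the right-hand side of the ODE~\eqref{critic ODE}. The remainder $M_{t+1}^i$ is the deviation of the sampled direction from its stationary mean; since the samples are drawn along the Markov chain rather than i.i.d., I would use the irreducibility and aperiodicity of Assumption~\ref{as:policy} (which, for the finite state space, yield a unique stationary distribution and geometric ergodicity) to split $M_{t+1}^i$ into a martingale-difference term and an asymptotically negligible term via the associated Poisson equation, so that the Markov-noise stochastic approximation hypotheses hold.

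The crux is to show that the ODE~\eqref{critic ODE} is globally asymptotically stable with a unique equilibrium, i.e., that $A^i:=\Phi^{iT}D_\pi(\gamma P_\pi-I)\Phi^i$ is negative definite. Writing $\|y\|_{D_\pi}^2=y^TD_\pi y$ for the stationary-distribution-weighted norm, the key fact is that $P_\pi$ is nonexpansive in this norm: because $P_\pi$ is row-stochastic with stationary distribution $d_\pi$, Jensen's inequality gives $\|P_\pi y\|_{D_\pi}\leq\|y\|_{D_\pi}$ for every $y$. For any nonzero $x\in\mathbb{R}^{L^i}$, setting $y=\Phi^i x$ (nonzero because $\Phi^i$ has full column rank by Assumption~\ref{as:linear_approx}) and applying Cauchy--Schwarz yields
\begin{align}
x^TA^ix=\gamma\,y^TD_\pi P_\pi y-\|y\|_{D_\pi}^2\leq(\gamma-1)\|y\|_{D_\pi}^2<0,\nonumber
\end{align}
since $\gamma<1$. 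Hence $A^i$ is negative definite, so the ODE has the unique globally asymptotically stable equilibrium $v_\pi^i=-(A^i)^{-1}\Phi^{iT}D_\pi\hat R_\pi^i$, and the scaled ODE $\dot v^i=A^iv^i$ is globally asymptotically stable at the origin.

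Finally I would close the loop with a standard stochastic approximation theorem \cite{borkar2009book}. The mean field $h^i$ is affine, hence Lipschitz; the noise has bounded conditional second moments by the uniformly bounded features and rewards (Assumptions~\ref{as:linear_approx} and~\ref{as:reward_bound}); the step sizes satisfy the Robbins--Monro conditions of Assumption~\ref{as:step_size}; and the stability of the scaled ODE supplies, via the Borkar--Meyn criterion, the uniform boundedness of $\{v_t^i\}$ asserted in the theorem. With the iterates confined to a compact set and the ODE globally asymptotically stable, $\{v_t^i\}$ tracks \eqref{critic ODE} and converges almost surely to $v_\pi^i$. I expect the main obstacle to be the rigorous treatment of the Markovian sampling noise together with the boundedness argument; the negative-definiteness computation, though central, follows directly from the contraction property established above.
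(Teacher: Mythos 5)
Your proposal is correct and follows the same overall architecture as the paper's proof: cast the decoupled per-agent recursion as a stochastic approximation, identify the mean ODE with the right-hand side of \eqref{critic ODE}, establish negative definiteness of $A^i=\Phi^{iT}D_\pi(\gamma P_\pi-I)\Phi^i$, invoke the Borkar--Meyn scaled-ODE criterion (Theorem~\ref{appendix:ODE bounded}) for $\sup_t\Vert v_t^i\Vert<\infty$, and conclude via the ODE method (Theorem~\ref{appendix:ODE_unc}). The genuine difference is in the crux step. The paper argues negative definiteness through an eigenvalue computation on $D_\pi(\gamma P_\pi-I)$, expressing an eigenvalue as a Rayleigh-type quotient whose denominator $x^T(\gamma P_\pi-I)x$ is asserted to be negative; that assertion uses only row-stochasticity of $P_\pi$ and is delicate, since the symmetric part of $\gamma P_\pi$ can have eigenvalues exceeding one without the $D_\pi$ weighting. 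You instead prove nonexpansiveness of $P_\pi$ in the $\Vert\cdot\Vert_{D_\pi}$ norm via Jensen's inequality and stationarity of $d_\pi$, then bound the quadratic form directly with Cauchy--Schwarz to get $x^TA^ix\leq(\gamma-1)\Vert\Phi^ix\Vert_{D_\pi}^2<0$. This is the classical argument of \cite{tsitsiklis1999}; it is cleaner, uses the stationary weighting where it is actually needed, and buys you negative definiteness of the quadratic form (hence a Hurwitz $A^i$ and a unique equilibrium) without reasoning about possibly complex eigenvalues. The two proofs also differ cosmetically in how the Markovian sampling noise is organized: the paper keeps the state-dependent mean $g_t^i(v^i,\xi_t)$ and appeals to the correlated-noise framework of \cite{kushner2003book}, whereas you propose a Poisson-equation split into a martingale difference plus a vanishing term; both are standard and rest on the same irreducibility and aperiodicity from Assumption~\ref{as:policy}. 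One small point to make explicit in your version: you need $D_\pi\succ 0$ (i.e., $d_\pi(s)>0$ for all $s$, which follows from irreducibility) so that $\Vert\Phi^ix\Vert_{D_\pi}>0$ for $x\neq 0$ under the full-column-rank condition of Assumption~\ref{as:linear_approx}.
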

Using the fact that the local critic parameters $v^i$ converge on the faster timescale by Theorem~\ref{thm: critic}, we let
\begin{align}
\delta_{\pi}=\frac{1}{N}\sum_{i\in\mathcal{N}}\big(r^i(s,a,s^\prime)+\gamma V^i(s^{i\prime};v_\pi^i)-V^j(s^i;v_\pi^i)\big)
\end{align}
denote the team-average TD error upon the critic convergence. What follows next is the main theorem that asserts policy convergence on the slower timescale.
\begin{theorem}\label{thm:actor}
Under Assumption~\ref{as:communication}-\ref{as:policy_updates}, the policy parameter $\theta_t^i$ asymptotically converges with probability one to a compact set of stationary points of the ODE
\begin{align}
\dot\theta^i&=\Psi^i_\Theta\big[\mathbb{E}_{d_\pi,\pi,\mathcal{P}}\big(\delta_\pi\cdot\nabla_{\theta^i}\log\pi_\theta^i(a^i \vert s^i;\theta^i\big)\big].\label{actor ODE}
\end{align}
\end{theorem}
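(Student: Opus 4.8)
The plan is to cast the projected actor update as a single-timescale stochastic approximation recursion whose driving vector field is exactly the right-hand side of \eqref{actor ODE}, and then invoke the Kushner--Clark theorem for projected stochastic approximation. The starting point is the two-timescale separation afforded by Assumption~\ref{as:step_size}: since $\alpha_t=o(\beta_t)$, the actor evolves on the slow timescale while the critic, which converges on the fast timescale by Theorem~\ref{thm: critic}, appears quasi-static from the actor's viewpoint. Concretely, I would replace the running critic $v_t^i$ by its limit $v_\pi^i$ and absorb the discrepancy into an asymptotically vanishing error term, using that $v_t^i\to v_\pi^i$ almost surely together with continuity of the TD error in the critic parameters. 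Simultaneously, Lemma~\ref{lem:TD aggregation} guarantees that the aggregated quantity $\delta_{t-K}$ computed by every agent is the \emph{exact} team-average TD error at time $t-K$, so the communication delays and dropouts of Assumption~\ref{as:communication} inject no bias into the gradient estimate: they only re-index the update by the fixed, bounded lag $K$.

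Next I would write the update $\theta_{t+1}^i=\Psi_{\Theta^i}\big[\theta_t^i+\alpha_{t-K}\,\delta_{t-K}\,\eta_{t-K}^i\big]$ in the canonical form
\begin{align}
\theta_{t+1}^i=\Psi_{\Theta^i}\Big[\theta_t^i+\alpha_{t-K}\big(h^i(\theta_{t-K})+\xi_{t-K}^i+b_{t-K}^i\big)\Big],\nonumber
\end{align}
where $h^i(\theta)=\mathbb{E}_{d_\pi,\pi,\mathcal{P}}\big[\delta_\pi\cdot\nabla_{\theta^i}\log\pi_\theta^i(a^i\vert s^i;\theta^i)\big]$ is the desired mean field, $\xi_{t-K}^i$ is a martingale-difference noise capturing the sampling fluctuation around this expectation, and $b_{t-K}^i$ collects the residual bias arising from the finite-time critic error and from the Markov chain not yet having equilibrated to its stationary distribution $d_\pi$. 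The identification of $h^i$ as a valid ascent direction follows from the aggregation identity \eqref{TD_error_agg} together with the policy-gradient expression \eqref{local_policy_gradient}, modulo the approximation error inherent in the linear critic; this is precisely why the limit set consists of stationary points of the \emph{approximated} team-average objective rather than the true one.

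I would then verify the hypotheses of the projected Kushner--Clark theorem in turn. Lipschitz continuity of $h^i$ follows from continuous differentiability of the policy (Assumption~\ref{as:policy}), the linear critic with uniformly bounded features (Assumption~\ref{as:linear_approx}), and uniformly bounded rewards (Assumption~\ref{as:reward_bound}), which jointly render $\delta_\pi$ and $\nabla_{\theta^i}\log\pi^i$ bounded with bounded derivatives. Boundedness of the iterates is automatic from the projection onto the hyper-rectangle $\Theta^i$ (Assumption~\ref{as:policy_updates}), which also renders the projected ODE \eqref{actor ODE} well-posed via the associated Skorokhod problem. The Robbins--Monro conditions $\sum_t\alpha_t=\infty$, $\sum_t\alpha_t^2<\infty$, $\alpha_t\to0$ are exactly Assumption~\ref{as:step_size}; the fixed lag $K$ is harmless because $\alpha_{t-K}/\alpha_t\to1$, so the delayed step sizes produce the same tail sums and the same limiting ODE. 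Finally, $\xi_{t-K}^i$ has zero conditional mean by construction and bounded second moments by the boundedness just established, so $\sum_t\alpha_{t-K}\xi_{t-K}^i$ converges almost surely.

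The hard part will be controlling the bias $b_{t-K}^i$ and showing it vanishes along the actor timescale, where two effects must be handled together. First, the critic error $v_t^i-v_\pi^i$ feeds into $\delta_{t-K}$, and one must argue through the two-timescale coupling (again $\alpha_t=o(\beta_t)$) that this contribution is asymptotically negligible relative to the slow updates. Second, the samples $(s_t,a_t,s_{t+1})$ come from a single Markov trajectory rather than i.i.d.\ draws from $d_\pi$, so the per-step empirical term does not equal $\mathbb{E}_{d_\pi}$. The latter I would resolve using irreducibility and aperiodicity of the chain (Assumption~\ref{as:policy}): geometric mixing makes the Markov-induced bias summable against $\alpha_{t-K}$, which can be absorbed into the ODE analysis through a Poisson-equation decomposition of the Markov noise. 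Once $b_{t-K}^i\to0$ is established in this averaged sense, the Kushner--Clark theorem yields almost-sure convergence of $\theta_t^i$ to a compact, connected, internally chain-transitive invariant set of the projected ODE \eqref{actor ODE}, that is, to its set of stationary points, completing the proof.
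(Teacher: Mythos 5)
Your proposal is correct and follows essentially the same route as the paper: both cast the delayed, projected actor update as a constrained stochastic approximation with mean field $\bar g(\theta)=\mathbb{E}_{d_\pi,\pi,\mathcal{P}}[\delta_\pi\cdot\nabla_{\theta^i}\log\pi^i]$, split the error into a martingale difference plus a bias term driven by the critic error $v_t^i-v_\pi^i$ that vanishes on the slow timescale via $\alpha_t=o(\beta_t)$, and conclude by a Kushner--Clark/Kushner--Yin projected ODE theorem. The only notable differences are cosmetic: you treat the Markov (non-i.i.d.) sampling noise more explicitly via a Poisson-equation decomposition, while the paper folds it into its stochastic-approximation framework, and the paper makes the final identification of the limit set explicit by exhibiting the surrogate objective $J^\prime(\pi_\theta)$ as a Lyapunov function with $\nabla_\theta J^\prime=\bar g$, a step you gesture at with the ``valid ascent direction'' remark but would need to spell out to pass from an invariant set to the set of stationary points.
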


\noindent
It is important to note that the policies of agents that employ the decentralized AC with TD error aggregation locally maximize only an approximation of the objective function $J(\pi_\theta)$. The local maxima of the approximate objective function, which we denote $J^\prime(\pi_\theta)$, are the stationary points of the ODE in \eqref{actor ODE} and do not coincide with the maxima of the original objective function $J(\pi_\theta)$, because the team-average TD error $\delta_\pi$ involves an approximation of the true value function. The discrepancy in the gradients is captured through the following relationship:

\begin{align*}
&\nabla_\theta{J}^\prime(\pi_\theta)
=\nabla_\theta J(\pi_\theta)+\nabla_\theta J^\prime(\pi_\theta)-\nabla_\theta J(\pi_\theta)\\
&=\nabla_\theta J(\pi_\theta)+\mathbb{E}_{d_\pi,\pi,\mathcal{P}}\big[\sum_{i\in\mathcal{N}}\big(r^i(s,a,s^\prime)+\gamma V^i(s^{i\prime};v_\pi^i)-V(s^i;v_\pi^i)\big)\nabla_{\theta}\log\pi_\theta\big]\\
&\qquad -\mathbb{E}_{d_\pi,\pi,\mathcal{P}}\big[\sum_{i\in\mathcal{N}}\big(r^i(s,a,s^\prime)+\gamma V_\pi^i(s^\prime)-V^i_\pi(s)\big)\nabla_{\theta}\log\pi_\theta\big]\\
&=\nabla_\theta J(\pi_\theta)\\
&\quad+\mathbb{E}_{d_\pi,\pi,\mathcal{P}}\big[\sum_{i\in\mathcal{N}}\big(\gamma V^i(s^{i\prime};v_\pi^i)-\gamma V_\pi^i(s^\prime)-V(s^i;v_\pi^i)+V^i_\pi(s)\big)\nabla_\theta\log\pi_\theta\big].
\end{align*}
From the last equality, it is evident that the estimation of the critic parameters $v^i$ and the assumption of local state $s^i$ contribute to a bias in the policy gradients. To reduce the bias, we would need to assume full state observability, which may not be feasible in the training phase due to the privacy concerns and scalability issues. Therefore, the success of our proposed cooperative AC method hinges on the ability of a practitioner to strike balance between observability and scalability. This process is somewhat similar to feature engineering in general machine learning. We do not address this fundamental challenge in this paper but want to make the reader aware of it. Finally, we note that the convergence theorems for Algorithm~\ref{alg:vanilla} extend to Algorithm~\ref{alg:acyclic}, except that we enforce $T_1=0$ and $T_2=1$ in Assumption~\ref{as:communication}. In the next section, we implement the decentralized AC algorithm with TD error aggregation in a simulated environment.

\section{Empirical analysis}
In this section, we present an empirical analysis of the decentralized AC algorithm with TD error aggregation (we refer to the algorithm as DAC-TD). We compare the performance of DAC-TD with the scalable AC (SAC) algorithm \cite{qu2020} and independent learners that employ the AC algorithm (AC) in a simple environment.\par
We consider a set of five agents, $\mathcal{N}=\{1,2,3,4,5\}$, that communicate on a line graph. Agent~$i$, $i\in\mathcal{N}$, has a binary local state and action space, $\mathcal{A}^i=\mathcal{O}^i=\{0,1\}$. The transition probabilities of agent $i\in\mathcal{N}$ and the rewards of the first agent are given as $p(s_{t+1}^i=1\vert s_t,a_t)=r^1_{t+1}(s_t,a_t)=\frac{1}{2N}\sum_{i\in\mathcal{N}}(s_t^i+a_t^i)$. All other agents in the set $\mathcal{N}$ receive zero rewards. This environment is intentionally crafted such that the local states are coupled and the optimal action for every agent is $a_t^i=1$. Note that by choosing  $a_t^i=1$, all agents are more likely to transition to $s_t^i=1$ and agent~$1$ collects a higher reward.\par
In training, we set the discount factor to $\gamma=0.9$, the learning rates for the actor and critic to $\alpha_t=0.01$ and $\beta_t=0.1$, respectively. We use a neural network with two hidden layers and 10 hidden units to model the actor and a neural network with two hidden layers and 5 hidden units to model the critic. Each hidden layer applies a leaky ReLU activation function with the parameter $0.3$. The local critics are trained over $25$ epochs, where the TD targets are re-computed every $5$ epochs. Each training episode has $100$ steps and we train the agents for $1000$ episodes. We assume that agents train their local actors and critics and transmit their TD errors at the end of each episode. By the definition of the line graph with five agents, the associated graph diameter is $K=4$. Therefore, it takes four episodes for the local TD errors to be received by every agent in the network and finally appear in the local DAC-TD actor updates. For SAC, we assume that each agent receives the TD errors from all agents in a one-hop neighborhood after one episode. For AC, no communication occurs between the agents, and hence their actor updates are not subject to delays.\par
We present the simulations results in Fig.~\ref{fig:synthetic}. The empirical result is consistent with our theory; the policy of the DAC-TD agents maximizes the team objective despite the time delays in the actor updates since the local TD errors are propagated to all agents in the network. We note that neither the AC agents nor the one-hop SAC agents achieve the team-average objective because the neighborhood size, over which the agents maximize their policy, is too small and most of the agents have no incentive to change their policy as they receive zero rewards. We note that the performance of DAC-TD is the same as of $K$-hop SAC that is omitted from the figure. This is because DAC-TD can be viewed as a generalization of the $K$-hop SAC for networks with communication latency.
\begin{figure}[t]
\centering
\includegraphics[width=0.9\linewidth]{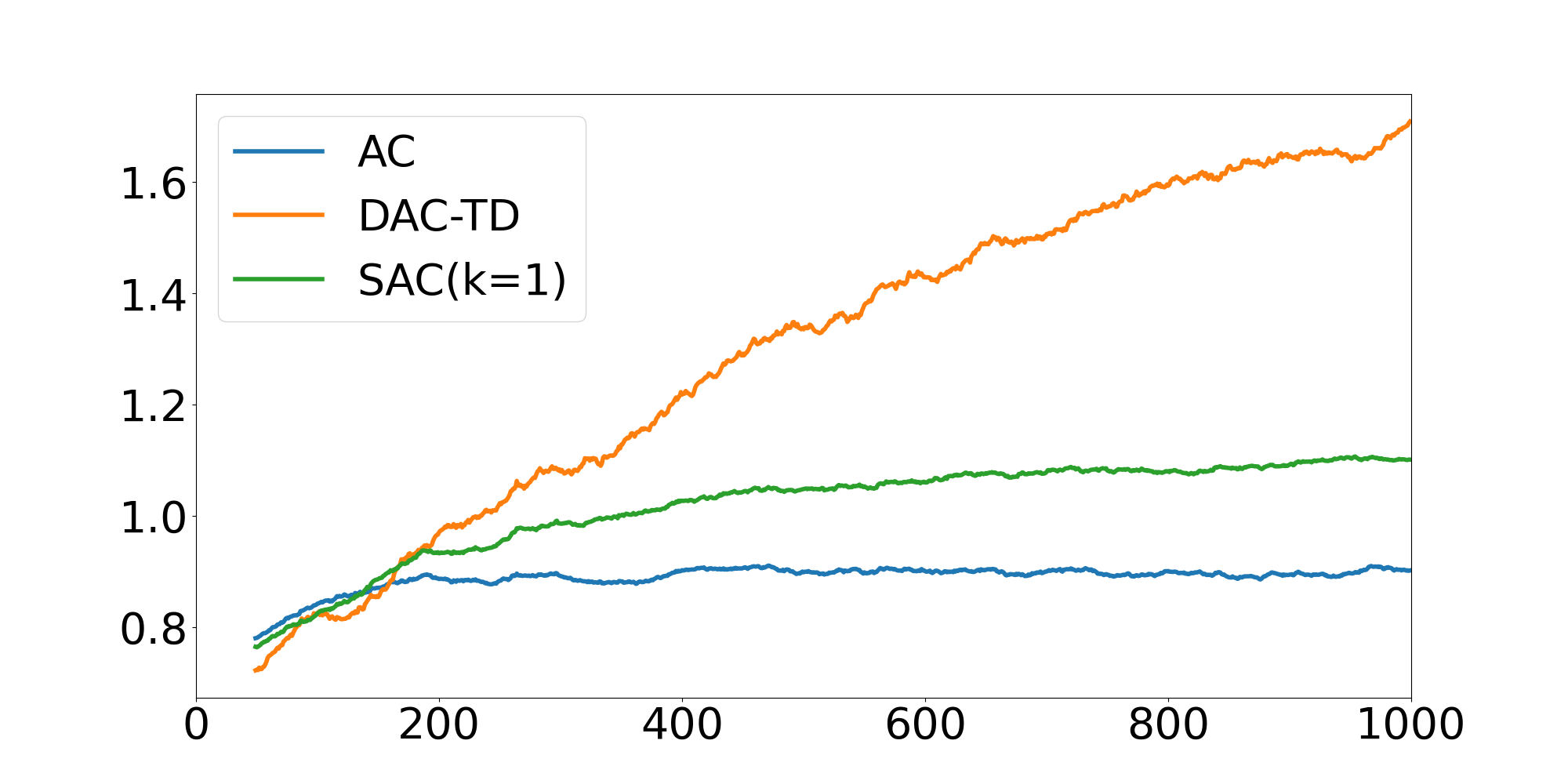}
\caption{Team-average episode returns of the DAC-TD, SAC, and AC agents.}
\label{fig:synthetic}
\end{figure}

\section{Conclusion and future work}\label{sec:conclusion}
In this paper, we introduce a new method for cooperative multi-agent reinforcement learning, where each agent receives private rewards, makes local observations, and communicates with other agents to achieve a team-average objective. Furthermore, we consider that communication channels are subject to time delays and packet dropouts, which is a relevant assumption in the decentralized setting. We developed the decentralized AC algorithm with TD error aggregation to tackle the above-mentioned challenges. We justify the design of this scalable and communication-efficient algorithm through a rigorous mathematical analysis that proves strong convergence to a policy that locally maximizes the team-average objective function. Furthermore, we verify that agents that employ our algorithm are fully cooperative in the empirical analysis.

\bibliography{references}

\begin{appendices}

\section{Stochastic Approximation}\label{appendix:SA}
In this part, we review theoretical results on the stochastic approximation, which provide a framework for convergence proofs in our work. We take under the scope two methods for stochastic approximation. The first method assumes that samples of a random variable (observations) are readily available for the parameter updates, while the second method assumes that the samples are obtained after multiple time steps, which leads to updates with delayed observations. We adopt sufficient conditions for the convergence of algorithms for stochastic approximation from \cite{kushner2003book} and \cite{borkar2009book} and use the notation from the former.
\subsection{Unconstrained stochastic approximation with correlated noise}\label{appendix:SA unc}
In the first part, we analyze the asymptotic behavior of stochastic updates on an irreducible Markov chain with correlated noise. We let $\theta_n$, $Y_n$ and $\xi_n$ denote the estimated parameter vector, observation, and state in a Markov chain defined on a finite set, respectively. We define the filtration $\mathcal{F}_n=\sigma(\theta_0,Y_{i-1},\xi_i,i\leq n)$. The unconstrained stochastic updates are given as follows
\begin{align}
\theta_{n+1}&=\theta_n+\epsilon_n\big[\mathbb{E}(Y_n\vert\mathcal{F}_n)+\delta M_n\big]\nonumber\\
&=\theta_n+\epsilon_n\big[g_n(\theta_n,\xi_n)+\delta M_n+\beta_n\big]\label{alg_unc}
\end{align}
where $\delta M_n=Y_n-E(Y_n\vert\mathcal{F}_n)$ is a martingale difference. 
\begin{assumption}\label{appendix:as1}
We introduce assumptions for the updates in \eqref{alg_unc} in the following lines.
\begin{enumerate}
	
	\item The function $g_n(\cdot,\cdot)$ is Lipschitz continuous in its first argument.
	
	\item $\{\xi_n\}_{n\geq 0}$ is an irreducible Markov chain with stationary distribution $\eta(\xi)$.
	
	\item The step size sequence $\{\epsilon_n\}_{n\geq 0}$ satisfies $\epsilon_n\geq 0$, $\sum_n\epsilon_n=\infty$, and $\sum_n\epsilon_n^2<\infty$.
	
	\item The martingale difference sequence $\{\delta M_n\}_{n\geq 0}$ satisfies for all $n\geq 0$ and some $c>0$ $$\mathbb{E}\big(\Vert\delta M_n\Vert^2\vert\theta_k,\delta M_{k-1},Y_k,k\leq n\big)\leq c\cdot(1+\Vert\theta_n\Vert^2).$$
		
	\item The sequence $\{\beta_n\}_{n\geq 0}$ is a bounded random sequence that satisfies $\lim_{n\rightarrow\infty}\beta_n=0$ with probability one.

\end{enumerate}
\end{assumption}

\begin{theorem}\label{appendix:ODE_unc}
Assumption~\ref{appendix:as1}, the asymptotic behavior of the algorithm in \eqref{alg_unc} is given by an ODE
\begin{align}
\dot\theta=\bar{g}(\theta)=\sum_i\eta(i)g(\theta,i).\label{appendix:ODE}
\end{align}
Furthermore, if $\sup_n\Vert\theta_n\Vert<\infty$ with probability one and the ODE \eqref{appendix:ODE} has a unique globally asymptotically stable equilibrium $\theta^*$, then $\theta_n\rightarrow\theta^*$ as $n\rightarrow\infty$.
\end{theorem}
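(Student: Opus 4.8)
The plan is to follow the ODE method for stochastic approximation, adapting the treatment of Markov state-dependent noise in \cite{kushner2003book}. The idea is to show that the piecewise-constant interpolation of $\{\theta_n\}$ asymptotically tracks the trajectories of the mean ODE \eqref{appendix:ODE}, and then to invoke global asymptotic stability to identify the limit.

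First I would dispense with the two easier error terms in \eqref{alg_unc}. The sequence $\{\beta_n\}$ is negligible by Assumption~\ref{appendix:as1}(5), since it is bounded and vanishes almost surely, so its contribution to any interpolated increment tends to zero. For the martingale noise, set $M_n=\sum_{i=0}^{n-1}\epsilon_i\delta M_i$; using the bound in Assumption~\ref{appendix:as1}(4), the almost-sure boundedness $\sup_n\Vert\theta_n\Vert<\infty$, and $\sum_n\epsilon_n^2<\infty$ from Assumption~\ref{appendix:as1}(3), the predictable quadratic variation of $M_n$ is summable, so $M_n$ converges almost surely by the martingale convergence theorem. Consequently $\sup_{m\geq n}\Vert\sum_{i=n}^{m}\epsilon_i\delta M_i\Vert\to 0$ almost surely, and the martingale term drops out asymptotically.

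The main obstacle is the correlated Markov noise $g_n(\theta_n,\xi_n)-\bar{g}(\theta_n)$, which cannot be handled by an i.i.d.\ law of large numbers. Here I would use the Poisson equation for the finite, irreducible chain $\{\xi_n\}$: for each fixed $\theta$ there is a function $\hat{v}(\theta,\cdot)$ solving $\hat{v}(\theta,i)-\sum_j P_{ij}\hat{v}(\theta,j)=g(\theta,i)-\bar{g}(\theta)$, which exists because $I-P$ is invertible on the subspace orthogonal to the stationary distribution $\eta$. Telescoping $\hat{v}$ along the trajectory (Abel summation) converts the fluctuation sum into a martingale difference plus boundary terms of order $\epsilon_n$; combined with the Lipschitz continuity of $g_n$ in its first argument from Assumption~\ref{appendix:as1}(1) and the boundedness of the iterates, these all vanish in the interpolated timescale. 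This is the delicate step, since it requires uniform control of $\hat{v}$ and of its variation as $\theta_n$ drifts.

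Finally I would assemble the ODE argument. Define $t_n=\sum_{i<n}\epsilon_i$ and the interpolation $\bar\theta(t_n+\cdot)$; by $\sup_n\Vert\theta_n\Vert<\infty$ and Lipschitz continuity of $\bar{g}$, the shifted processes form a uniformly bounded, equicontinuous family, hence relatively compact by the Arzel\`a--Ascoli theorem. The estimates above show every limit point is absolutely continuous and satisfies $\dot\theta=\bar{g}(\theta)$, which establishes the first claim. For the second claim, since $\theta^*$ is the unique globally asymptotically stable equilibrium of \eqref{appendix:ODE}, every limit trajectory converges to $\theta^*$, and a standard argument shows the iterates spend asymptotically all their time in any neighborhood of $\theta^*$, yielding $\theta_n\to\theta^*$ almost surely.
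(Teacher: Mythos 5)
The paper does not actually prove this theorem: it is stated as a background result imported from the stochastic-approximation literature (the surrounding text says the conditions are adopted from \cite{kushner2003book} and \cite{borkar2009book}), so there is no in-paper argument to compare yours against step by step. Your sketch is, however, a faithful outline of the standard proof of exactly this result: the decomposition into drift, martingale noise, Markov (correlated) noise, and the vanishing bias $\beta_n$; the martingale convergence theorem via the conditional second-moment bound and $\sum_n\epsilon_n^2<\infty$; the Poisson-equation/perturbed-test-function device for the state-dependent noise (solvable here because the chain is finite and irreducible, so $g(\theta,\cdot)-\bar g(\theta)$ lies in the range of $I-P$); and the Arzel\`a--Ascoli compactness argument identifying limit trajectories of the interpolated process with solutions of $\dot\theta=\bar g(\theta)$. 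Two small points worth flagging. First, the theorem's opening claim does not explicitly hypothesize $\sup_n\Vert\theta_n\Vert<\infty$, yet your argument (correctly) uses it already in the martingale and Poisson-equation steps; this reflects an imprecision in the statement as quoted rather than a flaw in your proof, since the textbook versions also require boundedness or tightness for the ODE-tracking part. Second, both the statement and your sketch silently pass from the time-dependent $g_n(\theta,\xi)$ in \eqref{alg_unc} to a time-independent $g(\theta,i)$ in the averaged ODE; a complete argument needs $g_n\to g$ uniformly on compacts (or $g_n\equiv g$), which is how the result is used in the paper's critic analysis. Neither point undermines your approach, which is essentially the proof the cited references give.
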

\begin{theorem}
Under Assumption~\ref{appendix:as1}, suppose that $\lim_{c\rightarrow\infty}\frac{\bar{g}(c\theta)}{c}=g_\infty(\theta)$ exists uniformly on compact sets for some $g_\infty\in C(\mathbb{R}^n)$. If the ODE $\dot\theta=g_\infty(\theta)$ has origin as the unique globally asymptotically stable equilibrium, then $\sup_n\Vert\theta_n\Vert<\infty$ with probability one.\label{appendix:ODE bounded}
\end{theorem}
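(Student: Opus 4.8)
The plan is to run the scaling (fluid-limit) argument of Borkar and Meyn \cite{borkar2009book}, which reduces boundedness of the iterates to global asymptotic stability of the scaled-limit ODE $\dot\theta=g_\infty(\theta)$. First I would pass to the continuous-time interpolation: setting $t_n=\sum_{k=0}^{n-1}\epsilon_k$, let $\bar\theta(\cdot)$ be the piecewise-linear interpolation with $\bar\theta(t_n)=\theta_n$. I would then partition the time axis into blocks $[T_k,T_{k+1})$ of a fixed length $T>0$ to be chosen, and on block $k$ introduce the scale $r_k=\max\{1,\Vert\bar\theta(T_k)\Vert\}$ and the rescaled trajectory $\hat\theta_k(t)=\bar\theta(T_k+t)/r_k$ for $t\in[0,T]$, so that $\Vert\hat\theta_k(0)\Vert\leq 1$ by construction.

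The second step is to show that $\hat\theta_k(\cdot)$ tracks the scaled ODE $\dot x=g_\infty(x)$. Dividing the update \eqref{alg_unc} by $r_k$, the effective drift seen by $\hat\theta_k$ is $\bar g(r_k\hat\theta_k)/r_k$, which converges to $g_\infty(\hat\theta_k)$ uniformly on compact sets by hypothesis (applied only on the bounded region where $\hat\theta_k$ lives over $[0,T]$). The rescaled martingale increments have conditional second moment bounded, via Assumption~\ref{appendix:as1}(4), by $c(1+\Vert\bar\theta\Vert^2)/r_k^2=c/r_k^2+c\Vert\hat\theta_k\Vert^2$, which is uniformly bounded on the block; hence the accumulated noise over block $k$ has variance of order $\sum_{t_n\in[T_k,T_{k+1})}\epsilon_n^2$, and since $\sum_n\epsilon_n^2<\infty$ these per-block tail sums vanish as $k\to\infty$, so by a Doob maximal-inequality estimate the block noise contribution goes to $0$. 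The bias term contributes $\beta_n/r_k\to 0$ by Assumption~\ref{appendix:as1}(5). A Gronwall estimate then gives, uniformly over all blocks with $r_k$ large, that $\hat\theta_k(\cdot)$ lies within a prescribed $\varepsilon$ of the solution $x(\cdot)$ of $\dot x=g_\infty(x)$ with $x(0)=\hat\theta_k(0)$.

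The third step exploits the global asymptotic stability assumption. Since the origin is globally asymptotically stable for $\dot x=g_\infty(x)$ and the closed unit ball is compact, there exist a time $T$ and a constant $c_0\in(0,1)$ such that every solution with $\Vert x(0)\Vert\leq 1$ satisfies $\Vert x(T)\Vert\leq c_0$. Choosing $\varepsilon$ so small that $c_0+\varepsilon<1$ and combining with the tracking estimate yields a block-to-block contraction: there is a threshold $R$ such that $r_k>R$ forces $\Vert\bar\theta(T_{k+1})\Vert\leq(c_0+\varepsilon)\,r_k$. Thus whenever the norm is large it strictly shrinks over a block, which precludes escape to infinity; iterating this contraction, together with the almost-sure finiteness of the noise contributions established above, delivers $\sup_n\Vert\theta_n\Vert<\infty$ with probability one.

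The main obstacle will be the second step, making the tracking estimate \emph{uniform over blocks} when the scale $r_k$ is random and $\mathcal{F}$-adapted. The two delicate points are (i) controlling the rescaled martingale noise so its block-wise contribution is negligible despite the $\Vert\theta_n\Vert^2$ growth allowed in Assumption~\ref{appendix:as1}(4), which is handled by the division by $r_k^2$ together with square-summability of $\epsilon_n$, and (ii) ensuring that the convergence $\bar g(r_k x)/r_k\to g_\infty(x)$ is only ever invoked on a compact set, which holds because the tracking keeps $\hat\theta_k$ in a bounded region on $[0,T]$. Once these estimates are secured, the stability-driven contraction in the third step is routine.
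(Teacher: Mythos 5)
The paper does not actually prove Theorem~\ref{appendix:ODE bounded}: it is imported verbatim from the stochastic-approximation literature (\cite{borkar2009book}, \cite{kushner2003book}) as part of the toolbox in Appendix~\ref{appendix:SA}, so there is no in-paper proof to compare against. Your sketch is a faithful reconstruction of the standard Borkar--Meyn scaling argument that underlies the cited result, and its overall architecture --- interpolation, blockwise rescaling by $r_k=\max\{1,\Vert\bar\theta(T_k)\Vert\}$, tracking of the scaled ODE, and the stability-driven contraction $\Vert\bar\theta(T_{k+1})\Vert\leq(c_0+\varepsilon)r_k$ for $r_k$ large --- is correct. You also correctly identify the two genuinely delicate points (uniformity of the tracking over random scales, and invoking $\bar g(c\,\cdot)/c\to g_\infty$ only on compacta), and the standard resolution of the apparent circularity in (ii) is the a priori discrete Gronwall bound $\sup_{t\in[0,T]}\Vert\hat\theta_k(t)\Vert\leq C$ obtained from the linear growth $\Vert\bar g(\theta)\Vert\leq L(1+\Vert\theta\Vert)$ implied by Assumption~\ref{appendix:as1}(1); you should state that bound explicitly before appealing to uniform convergence on compact sets.

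The one substantive omission is that Assumption~\ref{appendix:as1} places you in the \emph{correlated-noise} setting: the increment in \eqref{alg_unc} is $g_n(\theta_n,\xi_n)$, driven by a Markov chain $\xi_n$, not the averaged drift $\bar g(\theta_n)$. Your block estimates treat the drift as if it were already $\bar g$ plus a martingale difference plus the vanishing bias $\beta_n$; in the Markov-modulated case the discrepancy $g_n(\theta_n,\xi_n)-\bar g(\theta_n)$ is neither a martingale difference nor $o(1)$ pointwise, and its blockwise negligibility must be established (typically via a Poisson-equation decomposition or the averaging machinery of \cite[Ch.~6 and 8]{kushner2003book}), with the additional care that after division by $r_k$ the perturbation terms remain controlled at scale $r_k$. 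This is exactly the part of the Borkar--Meyn theorem that is nontrivial to extend from i.i.d.\ to Markov noise, so your outline, while sound in structure, is incomplete on the hypothesis set the theorem is actually stated under.
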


\subsection{Pipelined constrained stochastic approximation with martingale difference noise}\label{appendix:SA con}
In this part, we analyze the asymptotic behavior of constrained stochastic updates with martingale difference noise. The analytical results for strong convergence of the stochastic approximation with no delays in observations carry over to the stochastic approximation with delayed observations (see \cite[Chapter~12]{kushner2003book}). We let $\theta_n$ and $Y_n$ denote the estimated parameter and observation, respectively. We define the filtration $\mathcal{F}_n=\sigma(\theta_0,Y_{i-K},i\leq n)$, where $K>0$ is a constant delay. The constrained stochastic updates are given as follows
\begin{align}
\theta_{n+1}&=\Psi_\Theta\bigg(\theta_n+\epsilon_n\big[\mathbb{E}(Y_{n-K}\vert\mathcal{F}_{n-K})+\delta M_{n-K}\big]\bigg)\nonumber\\
&=\theta_n+\epsilon_n\big[\bar{g}(\theta_{n-K})+\delta M_{n-K}+\beta_{n-K}+Z_{n-K}\big],\label{alg_con}
\end{align}
where $\Psi_\Theta(\cdot)$ is a projection operator that maps the stochastic updates into a compact admissible set $\Theta$, the term $Z_{n-K}$ is due to the projection, and $\delta M_{n-K}=Y_{n-K}-E(Y_{n-K}\vert\mathcal{F}_{n-K})$ is a martingale difference.
\begin{assumption}\label{appendix:as2}
We introduce assumptions for the constrained algorithm in \eqref{alg_con}.

\begin{enumerate}
	\item The function $\bar{g}(\cdot)$ is continuous.
	
	\item The step size sequence $\{\epsilon_n\}_{n\geq 0}$ satisfies $\epsilon_n\geq 0$, $\sum_n\epsilon_n=\infty$, and $\sum_n\epsilon_n^2<\infty$.
	
	\item The sequence $\{\beta_n\}_{n\geq 0}$ is a bounded random sequence that satisfies $\lim_{n\rightarrow\infty}\beta_n=0$ with probability one.
	
	\item The martingale difference sequence $\{\delta M_n\}_{n\geq 0}$ satisfies for all $n\geq 0$ and some $c>0$\footnote{This is a sufficient condition for the convergence of the martingale difference, which is proved in \cite[p.128]{kushner2003book}} $$\mathbb{E}\big(\Vert\delta M_n\Vert^2\vert\theta_k,\delta M_{k-1},Y_k,k\leq n\big)\leq c\cdot(1+\Vert\theta_n\Vert^2).$$
	
	\item The constraint set $\Theta$ is a hyperrectangle. In other words, for finite vectors $l$ and $u$ such that $l<u$, the constraint set is defined as $\Theta=\{\theta:l\leq\theta\leq u\}$.

\end{enumerate}
\end{assumption}
\begin{theorem}\label{thm:ODE_con}
Under Assumption~\ref{appendix:as2}, the asymptotic behavior of algorithm \eqref{alg_con} is described by an ODE
\begin{align}
\dot\theta=\bar{g}(\theta)+z.\label{ODE2}
\end{align}
\end{theorem}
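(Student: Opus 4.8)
The plan is to invoke the \emph{ODE method} for constrained stochastic approximation. First I would introduce the time scale $t_n=\sum_{i=0}^{n-1}\epsilon_i$ and the piecewise-constant interpolation $\theta^0(t)=\theta_n$ for $t\in[t_n,t_{n+1})$, together with its left shifts $\theta^n(t)=\theta^0(t_n+t)$. The objective is to show that every subsequence of $\{\theta^n(\cdot)\}$ admits a further subsequence converging, uniformly on compact time intervals, to a solution of the projected ODE \eqref{ODE2}, in which $z(\cdot)$ is the minimal reflection term that keeps the trajectory inside $\Theta$.

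The first substantive step controls the noise. Since the projection confines every iterate to the compact hyperrectangle $\Theta$ (Assumption~\ref{appendix:as2}, item~5), the sequence $\{\theta_n\}$ is bounded, so item~4 bounds the conditional second moments $\mathbb{E}(\Vert\delta M_{n-K}\Vert^2\mid\mathcal{F}_{n-K})$ uniformly. Combined with $\sum_n\epsilon_n^2<\infty$ (item~2), the partial sums $\sum_{i\le n}\epsilon_i\,\delta M_{i-K}$ form an $L^2$-bounded martingale and therefore converge with probability one by the martingale convergence theorem; the bias contribution vanishes because $\beta_{n-K}\to 0$ (item~3).

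Convergence of that martingale forces its summands to vanish, so $\theta_{n+1}-\theta_n\to 0$ almost surely, whence
\begin{align*}
\Vert\theta_n-\theta_{n-K}\Vert\le\sum_{j=n-K}^{n-1}\Vert\theta_{j+1}-\theta_j\Vert\longrightarrow 0
\end{align*}
because the delay $K$ is fixed. Continuity of $\bar g$ (item~1) then lets me replace $\bar g(\theta_{n-K})$ by $\bar g(\theta_n)$ up to an asymptotically negligible error, which is exactly the reduction of the delayed recursion to an ordinary projected recursion licensed by \cite[Chapter~12]{kushner2003book}. With the noise and delay neutralized and the iterates trapped in the compact set $\Theta$, the family $\{\theta^n(\cdot)\}$ is equicontinuous in the extended sense, so Arzel\`a--Ascoli yields the required convergent subsequences.

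It remains to identify the limit. Along a convergent subsequence the drift accumulates $\bar g$ by its continuity, the martingale and bias terms integrate to zero, and the projection increments $Z_{n-K}$ aggregate into the reflection process $z(\cdot)$; the hyperrectangular structure of $\Theta$ guarantees that $z$ acts only on the currently active faces and is the minimal force keeping the limit in $\Theta$, delivering \eqref{ODE2}. \textbf{The main obstacle} I expect is the joint bookkeeping of delay, martingale noise, and reflection: one must check that reindexing the observation by $K$ preserves the martingale-difference property relative to $\mathcal{F}_{n-K}$ and that the \emph{delayed} projection increments still limit to the correct boundary term, and it is precisely the diminishing-step-size reduction of the earlier steps that makes this bookkeeping go through.
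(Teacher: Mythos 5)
The paper does not actually prove this theorem; it imports it as a background result on constrained stochastic approximation with delayed observations from \cite[Chapter~12]{kushner2003book} (with the martingale-noise condition justified by the footnote pointing to p.~128 of that reference). Your sketch correctly reconstructs exactly the argument that citation relies on --- piecewise-constant interpolations, almost-sure convergence of the $L^2$-bounded martingale $\sum_{i}\epsilon_i\,\delta M_{i-K}$, the reduction $\Vert\theta_n-\theta_{n-K}\Vert\to 0$ that neutralizes the fixed delay $K$, and the extended Arzel\`a--Ascoli plus limit-identification step that produces the minimal reflection term $z$ for the hyperrectangle $\Theta$ --- so it takes essentially the same route as the paper's (cited) proof.
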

\begin{theorem}\label{thm:ODE_con 2}
Suppose that the solution of the ODE in \eqref{ODE2} asymptotically converges to a compact set of stationary points $\Theta^*$. Then, the parameters $\theta_n$ in algorithm \eqref{alg_con} converge with probability one to $\Theta^*$ as $n\rightarrow\infty$.
\end{theorem}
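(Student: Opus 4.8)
\noindent
The plan is to treat Theorem~\ref{thm:ODE_con 2} as the concluding step of the Kushner--Yin ODE method for constrained stochastic approximation, building directly on Theorem~\ref{thm:ODE_con}. That theorem already identifies the limiting mean dynamics of the delayed, projected recursion \eqref{alg_con} as the reflected ODE \eqref{ODE2}, having absorbed the finite delay, the martingale noise $\delta M_{n-K}$, the asymptotically vanishing bias $\beta_{n-K}$, and the projection correction $Z_{n-K}$ into the drift $\bar g(\theta)$ and the boundary reflection $z$. What remains is to pass from this ODE characterization to almost-sure convergence of the discrete iterates, and this is exactly where the hypothesis on the limit behavior of \eqref{ODE2} enters.

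First I would form the piecewise-constant interpolation $\theta^0(\cdot)$ of $\{\theta_n\}$ on the algorithmic timescale $t_n=\sum_{k<n}\epsilon_k$, together with its left shifts $\theta^n(\cdot)=\theta^0(t_n+\cdot)$. Under Assumption~\ref{appendix:as2} the iterates never leave the compact hyperrectangle $\Theta$, so $\{\theta^n(\cdot)\}$ is uniformly bounded and, by continuity of $\bar g$ and the step-size conditions, equicontinuous in the extended sense appropriate to the reflected dynamics; the Arzel\`a--Ascoli theorem then extracts convergent subsequences. By Theorem~\ref{thm:ODE_con}, every subsequential limit is, with probability one, a solution of \eqref{ODE2}, and because the shifts are defined for negative arguments as well, the limit is an entire solution. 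The standard consequence is that the set $L(\omega)$ of limit points of $\{\theta_n(\omega)\}$ is, for almost every sample path, a nonempty compact connected set that is invariant under the reflected flow of \eqref{ODE2}.

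Next I would invoke the hypothesis that every solution of \eqref{ODE2} converges to the compact set of stationary points $\Theta^*$. Writing $\phi_t$ for the solution semiflow, invariance of $L(\omega)$ gives $\phi_t(L(\omega))=L(\omega)$ for all $t\ge 0$, while the convergence hypothesis gives $\mathrm{dist}\big(\phi_t(L(\omega)),\Theta^*\big)\to 0$ as $t\to\infty$. Since the left-hand set is independent of $t$, these two facts force $L(\omega)\subseteq\Theta^*$ almost surely, which is precisely the assertion that $\theta_n\to\Theta^*$ with probability one.

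The main obstacle is not this limit-set deduction but the justification, underlying Theorem~\ref{thm:ODE_con}, that the noise and delay genuinely wash out so that the invariance argument is legitimate. Concretely, one must show the martingale sum $\sum_k \epsilon_k\,\delta M_{k-K}$ converges almost surely; this follows from the conditional second-moment bound in Assumption~\ref{appendix:as2}, the boundedness of $\theta_n$ in $\Theta$, and $\sum_k\epsilon_k^2<\infty$ through the martingale convergence theorem. One must also check that the fixed delay $K$ is asymptotically harmless, which holds because $\Vert\theta_n-\theta_{n-K}\Vert=O\big(\sum_{j=n-K}^{n-1}\epsilon_j\big)\to 0$ as $\epsilon_n\to 0$, so that evaluating the drift at $\theta_{n-K}$ rather than $\theta_n$ does not alter the limiting ODE. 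Since Theorem~\ref{thm:ODE_con} already encapsulates these facts, the present statement is essentially its invariant-set corollary, and it suffices to cite the constrained and delayed convergence theorems of \cite[Chapters~5 and~12]{kushner2003book} to make the argument rigorous.
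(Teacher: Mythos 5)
The paper offers no proof of Theorem~\ref{thm:ODE_con 2} at all: it is imported verbatim from \cite[Chapters~5--6 and~12]{kushner2003book}, and your sketch is a faithful reproduction of that standard ODE-method argument (interpolated processes, Arzel\`a--Ascoli, invariance of the limit set, washing out of the martingale noise and the fixed delay), so in that sense you and the paper take the same route. The one step worth tightening is the deduction $L(\omega)\subseteq\Theta^*$: invariance gives you, for each $\theta\in L(\omega)$ and each $t$, a point $\theta'_t\in L(\omega)$ with $\phi_t(\theta'_t)=\theta$, and concluding $\mathrm{dist}(\theta,\Theta^*)=0$ from $\mathrm{dist}(\phi_t(\theta'_t),\Theta^*)\to 0$ requires convergence to $\Theta^*$ that is \emph{uniform} over initial conditions in the compact set $L(\omega)$, not merely the pointwise convergence stated in the hypothesis. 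This is closed in the standard treatments either by assuming $\Theta^*$ is uniformly attracting, by passing to internally chain-transitive sets, or --- as the paper implicitly does when it applies this theorem in the proof of Theorem~\ref{thm:actor} --- by exhibiting a Lyapunov function ($J^\prime$ nondecreasing along the reflected flow) whose level-set argument forces the invariant limit set into $\Theta^*$; any of these fixes is routine, but as written your invariance-plus-pointwise-convergence step does not quite follow.
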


\section{Proofs of Convergence Theorems}\label{appendix:proofs}
In this section, we present proofs of Theorem~\ref{thm: critic} and Theorem~\ref{thm:actor}. The proofs of convergence of the critic and the actor are motivated by analytical results on the stochastic approximation that are presented in Appendix~\ref{appendix:SA}.
\subsection{Proof of Theorem~\ref{thm: critic}}
We slightly abuse notation by writing $\phi_t=\phi(s_t)$. We define $A_t^i=\phi_t^i(\gamma\phi_{t+1}^i-\phi_t^i)^T$ and $b_t^i=r_{t+1}^i\phi_t^i$, which allows us to write the critic updates as $v_{t+1}^i=v_t^i+\beta_t(A_t^iv_t^i+b_t^i)$. We define the filtration $\mathcal{F}_{t,v}^i=\sigma(v_0^i,Y_{\tau-1}^i,\xi_\tau,\tau\leq t)$, which consists of the initial parameter value $v_0^i$, the history of the Markov chain $\xi_\tau=(s_\tau,a_\tau)$, and the observations $Y_\tau^i=A_\tau^iv_\tau^i+b_\tau^i$. The filtration $\mathcal{F}_{t,v}^i$ describes the evolution of the parameters $v_t^i$ up to time~$t$. Furthermore, we define a function $g_t^i(v_t^i,\xi_t)=\mathbb{E}_\mathcal{P}\big(A_t^iv_t^i+b_t^i\vert\mathcal{F}_{t,v}^i\big)$ and a martingale difference $\delta M_t^i=A_t^iv_t^i+b_t^i-\mathbb{E}_\mathcal{P}\big(A_t^iv_t^i+b_t^i\vert\mathcal{F}_{t,v}^i\big)$. We rewrite the critic updates as follows
\begin{align}
v_{t+1}^i=v_t^i+\beta_t\big[g_t^i(v_t^i,\xi_t)+\delta M_t^i\big].\label{critic update}
\end{align}
To establish convergence of the recursion in \eqref{critic update}, we check the conditions in Appendix~\ref{appendix:SA unc}. We note that $A_t^i$ and $b_t^i$ are uniformly bounded by Assumption~\ref{as:linear_approx} and \ref{as:reward_bound}. Therefore, the function $g_t^i(v_t^i,\xi_t^i)$ is Lipschitz continuous in its first argument. Furthermore, the same assumptions imply that the martingale difference $\delta M_t^i$ satisfies
\begin{align*}
\mathbb{E}\big(\Vert\delta M_t^i\Vert^2\vert\mathcal{F}_{t,v}^i\big)&=\mathbb{E}\big(\Vert A_t^iv_t^i+b_t^i-\mathbb{E}_\mathcal{P}\big(A_t^iv_t^i+b_t^i\vert\mathcal{F}_{t,v}^i\big)\Vert^2\vert\mathcal{F}_{t,v}^i\big)\\
&\leq K\cdot(1+\Vert v_t^i\Vert^2).
\end{align*}
The Markov chain $\{\xi_t\}_{t\geq 0}$ is irreducible and aperiodic by Assumption~\ref{as:policy}. Its stationary distribution under policy $\pi(a\vert s;\theta)$, denoted as $\eta_\pi(\xi)$, satisfies $\eta_\pi(\xi)=d_\pi(s)\pi(a\vert s)$. Finally, the step size sequence $\beta_t$ satisfies Assumption~\ref{as:step_size}. Hence, the asymptotic behavior of the local critic updates follows an ODE
\begin{align*}
\dot{v}^i&=\bar{g}^i(v^i)=\sum_{\xi} \eta_\pi(\xi)g^i(v^i,\xi)=\sum_{s}d_\pi(s)\sum_{a}\pi(a\vert s)g^i(v^i,s,a)\\
&= \sum_{s}d_\pi(s)\sum_{a}\pi(a\vert s)\sum_{s^\prime} p(s^\prime\vert s,a)\big[A^i(s,s^\prime)v^i+b^i(s,a,s^\prime)\big]\\
&=\sum_{s}d_\pi(s)\bigg(\sum_{s^\prime} p_\pi(s^\prime\vert s)A^i(s,s^\prime)v^i+\sum_{a}\pi(a\vert s)\sum_{s^\prime} p(s^\prime\vert s,a)b^i(s,a,s^\prime)\bigg)\\
&=\Phi^{iT}D_\pi(\gamma P_\pi-I)\Phi^i v^i+\Phi^{iT}D_\pi\hat{R}_\pi^i.
\end{align*}
Furthermore, we consider a function $g_\infty^i(y)=\lim_{c\rightarrow\infty}\frac{\bar{g}^i(cy)}{c}=\Phi_i^TD_\pi(\gamma P_\pi-I)\Phi_i y$ and analyze its eigenvalues. The diagonal matrix $D_\pi$ is positive definite with probability one as it denotes the stationary distribution of states. The eigenvalues of the stochastic matrix $P_\pi$ are less than or equal to one, and hence $\gamma P_\pi-I$ has strictly negative eigenvalues. Denoting $\lambda$ and $x$ an arbitrary eigenvalue-eigenvector pair of the matrix product $D_\pi(\gamma P_\pi-I)$, we obtain
\begin{align*}
D_\pi(\gamma P_\pi-I)x&=\lambda x \implies \lambda=\frac{x^T(\gamma P_\pi-I)^TD_\pi(\gamma P_\pi-I)x}{x^T(\gamma P_\pi-I)x}<0
\end{align*}
since the numerator is positive definite and the denominator is negative definite with probability one. Hence the eigenvalues of $D_\pi(\gamma P_\pi-I)$ are strictly negative with probability one. Using the fact that $\Phi^i$ is a full rank matrix by Assumption~\ref{as:linear_approx} and considering $x\in Im(\Phi^i)$, the eigenvalues of $\Phi_i^TD_\pi(\gamma P_\pi-I)\Phi_i$ are negative with probability one. Therefore, $\dot{y}=g_\infty(y)$ has a unique asymptotically stable equilibrium at the origin with probability one. Applying Theorem~\ref{appendix:ODE bounded}, we obtain $\sup_t\Vert v_t^i\Vert<\infty$ with probability one. Finally, the critic parameters $v_t^i$ converge to the equilibrium of the ODE in \eqref{critic ODE} with probability one by Theorem~\ref{appendix:ODE_unc}.

\subsection{Proof of Theorem~\ref{thm:actor}}
The actor updates are given as $\theta_{t+1}=\theta_t+\alpha_{t-K}\big[\delta_{t-K}\cdot\nabla_{\theta}\log\pi(a_{t-K}\vert s_{t-K};\theta_{t-K})+Z_{t-K}\big]$, where $Z_{t-K}$ is a term that projects the parameters $\theta_t$ into the compact set $\Theta$. We define the filtration $\mathcal{F}_{t,\theta}=\sigma(\theta_0,Y_{\tau-K-1},\tau\leq t)$, which consists of the initial parameter value $\theta_0$ and observations $Y_\tau=\delta_{\tau}\cdot\nabla_{\theta}\log\pi(a_\tau\vert s_\tau;\theta_\tau)+Z_\tau$.  We write the actor updates as follows
\begin{align}
\theta_{t+1}=\theta_t+\alpha_{t-K}\big[\bar{g}(\theta_{t-K})+\delta M_{t-K}+\beta_{t-K}+Z_{t-K}\big],\label{actor update}
\end{align}
where the functions $\bar{g}(\cdot)$, $\delta M_{t-K}$, and $\beta_{t-K}$ are given as
\begin{align*}
\bar{g}(\theta_{t-K})&=\mathbb{E}_{d_{\pi_{t-K}},\pi_{t-K},\mathcal{P}}\big[\delta_{t-K,\pi}\cdot\nabla_{\theta}\log\pi(a_{t-K}\vert s_{t-K};\theta_{t-K})\big]\\
\delta M_{t-K}&=\delta_{t-K}\cdot\nabla_{\theta}\log\pi(a_{t-K}\vert s_{t-K};\theta_{t-K})\\
&\qquad-\mathbb{E}_{d_{\pi_{t-K}},\pi_{t-K},\mathcal{P}}\big[\delta_{t-K}\cdot\nabla_{\theta}\log\pi(a_{t-K}\vert s_{t-K};\theta_{t-K})\vert\mathcal{F}_{t-K,\theta}\big]\\
\beta_{t-K}=&\mathbb{E}_{d_{\pi_{t-K}},\pi_{t-K},\mathcal{P}}\big[(\delta_{t-K}-\delta_{t-K,\pi})\cdot\nabla_{\theta}\log\pi(a_{t-K}\vert s_{t-K};\theta_{t-K})\vert\mathcal{F}_{t-K,\theta}\big]
\end{align*}
Next, we verify conditions in Appendix~\ref{appendix:SA con} to establish convergence of the constrained stochastic updates in \eqref{actor update}. We note that $\delta_{t,\pi}$ is continuous in $d_{\pi}$ and $p_{\pi}$ since they are are continuous in $\theta_t$. Furthermore, the term $\nabla_{\theta}\log\pi(a_t\vert s_t;\theta_t)$ is continuous in $\theta_t$ by Assumption~\ref{as:policy}. Therefore, the function $\bar{g}(\cdot)$ is continuous. The step size sequence satisfies Assumption~\ref{as:step_size}. We note that $\lim_{t\rightarrow\infty}\delta_{t}=\delta_{t,\pi}$ with probability one on the faster timescale, and so $\lim_{t\rightarrow\infty}\beta_t=0$ with probability one. By Assumption~\ref{as:linear_approx} and \ref{as:reward_bound}, the term $\delta_{t}$ is uniformly bounded. Furthermore, the term $\nabla_{\theta}\log\pi(a_{t}\vert s_{t};\theta_{t})$ is uniformly bounded for $\theta_t\in\Theta$ by Assumption~\ref{as:policy} and \ref{as:policy_updates}. Hence, the martingale difference $\delta M_t^i$ satisfies $\mathbb{E}_{d_{\pi_{t-K}},\pi_{t-K},\mathcal{P}}\big(\Vert\delta M_{t-K}\Vert^2\vert\mathcal{F}_{t-K}^i\big)<\infty$ for $t\geq K$. Finally, the constraint set is a hyperrectangle by Assumption~\ref{as:policy_updates}. Using Theorem~\ref{thm:ODE_con} in Appendix~\ref{appendix:SA con}, the asymptotic behavior of the actor updates is described by the ODE 
\begin{align}
\dot\theta_t=\bar{g}(\theta_t)+z_t,\label{ODE actor}
\end{align}
where $z$ is the minimum force needed to ensure that $\theta_t\in\Theta$. We define a surrogate objective function
\begin{align*}
J^\prime(\pi_\theta)=\mathbb{E}_{\pi,d_\pi,\mathcal{P}}\big[\sum_{i\in\mathcal{N}}\big(r^i(s,a,s^\prime)+\gamma V^i(s^{i\prime};v_\pi^i)-V(s^i;v_\pi^i)\big)\big].
\end{align*}
By the policy gradient theorem \cite{sutton2018book}, we obtain $\nabla_{\theta}J^\prime(\pi_{\theta_t})=\bar{g}(\theta_t)$. The rate of change of $J^\prime(\pi_{\theta_t})$ is given as
\begin{align*}
\dot{J}^\prime(\pi_{\theta_t})=\nabla_\theta J^\prime(\pi_{\theta_t})^T\big(\nabla_\theta J^\prime(\pi_{\theta_t})+z_t\big).
\end{align*}
The elements with active constraints satisfy $z_t=-\nabla_\theta J^\prime(\pi_{\theta_t})$ while the elements with inactive constraints satisfy $z_t=0$. Therefore, we obtain $\dot{J}^\prime(\pi_{\theta_t})>0$ if $\nabla_\theta J^\prime(\pi_{\theta_t})+z_t\neq0$ and $\dot{J}^\prime(\pi_{\theta_t})=0$ otherwise, which implies that $\theta$ asymptotically converges to a compact set of stationary points $\Theta^*$ that locally maximize $J^\prime(\pi_{\theta_t})$. Using Theorem~\ref{thm:ODE_con 2}, we establish the asymptotic convergence of $\theta_t$ to a set of stationary points $\Theta^*$ with probability one.

\section{Algorithm extension}\label{appendix:extension}
In this section, we provide the proof of Lemma~\ref{lem:TD aggregation2}.
\subsection{Proof of Lemma~\ref{lem:TD aggregation2}}
For simplicity, we consider the dynamics of the first element of the vectors $\hat\Delta_t^i$, $z_t^i$, and $\rho_t^{ij}$, and we note that its index is increased with time. We define $x_0^i=[\hat\Delta_t^i]_1$ and $\hat{z}_0^i=[z_t^i]_1$. Furthermore, we slightly abuse notation and let $\mathcal{N}^i_k$ denote the set of all $k$-hop neighbors of agent~$i$. We will prove by induction that for $\tau\geq 1$, 
\begin{align}
    x_\tau^{i} = \sum_{0\le k\le \tau}\sum_{l\in\mathcal{N}^i_k}\delta_t^l,\label{eq:xx}\qquad \hat{z}_\tau^{ij} = \sum_{l\in\mathcal{N}^i_\tau\setminus \mathcal{N}^j_{\tau-1}} \delta_t^l, \qquad j\in\mathcal{N}^i_1.
\end{align}
Suppose that $\tau=1$. Since $x_0^i = z_{0}^{ij}= \delta_t^i$, we have
\begin{align*}
    x_{1}^i &= x_0^i + \sum_{j\in\mathcal{N}^i_1} x_0^j = \delta_t^i + \sum_{j\in\mathcal{N}^i_1} \delta_t^j = \sum_{0\le k\le 1}\sum_{l\in\mathcal{N}^i_k}\delta_t^l, \\
    \hat{z}_{1}^{ij} &= x_{1}^i - x_{0}^i - x_{0}^j = \sum_{l\in\mathcal{N}^i_1} x_0^l - x_{0}^j = \sum_{l\in\mathcal{N}^i_1 \setminus \{j\}} \delta_t^l =\sum_{l\in\mathcal{N}^i_1 \setminus \mathcal{N}^j_0} \delta_t^l, \qquad j\in\mathcal{N}^i_1.
\end{align*}
Next, suppose that \eqref{eq:xx} hold for $\tau>1$. We have
\begin{align}
    &x_{\tau+1}^i
    = x_{\tau}^i + \sum_{j\in\mathcal{N}^i_1} \Big(x_{\tau}^j - x_{\tau-1}^j-\hat{z}_{\tau-1}^{ij}\Big)\nonumber\\
    &= \sum_{0\le k\le {\tau}}\sum_{l\in\mathcal{N}_k^i}\delta_t^l + \sum_{j\in\mathcal{N}^i_1} \Big(\sum_{0\le k\le {\tau}}\sum_{l\in\mathcal{N}^j_k}\delta_t^l - \sum_{0\le k\le {\tau-1}}\sum_{l\in\mathcal{N}_k^j}\delta_t^l-\sum_{l\in\mathcal{N}^i_{\tau-1}\setminus \mathcal{N}^j_{\tau-2}} \delta_t^l\Big)\nonumber\\
    &= \sum_{0\le k\le {\tau}}\sum_{l\in\mathcal{N}^i_k}\delta_t^l + \sum_{j\in\mathcal{N}^i_1} \Big(\sum_{l\in\mathcal{N}^j_{\tau}}\delta_t^l-\sum_{l\in \mathcal{N}^i_{\tau-1}\setminus \mathcal{N}^j_{\tau-2}}\delta_t^l\Big)\label{eq:proof_x}\\
    &\hat{z}_{\tau+1}^{ij}
    = \hat{z}_{\tau-1}^{ij} + \Big(x_{\tau+1}^i - x_{\tau}^i\Big) - \Big(x_{\tau}^j - x_{\tau-1}^j\Big)\nonumber\\
    &= \sum_{l\in\mathcal{N}^i_{\tau-1 }\setminus \mathcal{N}^j_{{\tau}-2}} \delta_t^l + \sum_{l\in\mathcal{N}^i_{\tau+1}}\delta_t^l - \sum_{l\in\mathcal{N}^j_{\tau}}\delta_t^l, \qquad j\in\mathcal{N}^i_1.\label{eq:proof_y}
\end{align}
Since $\mathcal{G}$ is an undirected acyclic graph, we have $\mathcal{N}^j_{\tau} \setminus \mathcal{N}^i_{\tau-1}=\mathcal{N}^j_{\tau} \cap \mathcal{N}^i_{\tau+1}$, $\mathcal{N}^j_{\tau} \setminus \mathcal{N}^i_{\tau+1}=\mathcal{N}^j_{\tau} \cap \mathcal{N}^i_{\tau-1}$, and $\mathcal{N}^i_{\tau-1}\setminus \mathcal{N}^j_{\tau-2}\subset\mathcal{N}^j_{\tau}$  for any two immediate neighbors $i$ and $j$. We use this fact to obtain
\begin{align*}
    \mathcal{N}^j_{\tau} \setminus (\mathcal{N}^i_{\tau-1}\setminus\mathcal{N}^j_{{\tau}-2})
    &=(\mathcal{N}^j_{{\tau}-2}\cap\mathcal{N}^j_{\tau})\cup  (\mathcal{N}^j_{\tau} \setminus \mathcal{N}^i_{\tau-1})\\
    &= \mathcal{N}^j_{\tau} \setminus \mathcal{N}^i_{\tau-1}\\
    &= \mathcal{N}^j_{\tau} \cap \mathcal{N}^i_{\tau+1}.\label{eq:proof_x_set}
 \end{align*}
Applying the relationships between the sets to \eqref{eq:proof_x} and \eqref{eq:proof_y}, we obtain
\begin{align*}
    x_{\tau+1}^i 
    &= \sum_{0\le k\le {\tau}}\sum_{l\in\mathcal{N}^i_k}\delta_t^l + \sum_{j\in\mathcal{N}^i_1} \sum_{l\in \mathcal{N}^j_{\tau} \cap \mathcal{N}^i_{\tau+1}}\delta_t^l\\
    &= \sum_{0\le k\le {\tau}}\sum_{l\in\mathcal{N}^i_k}\delta_t^l + \sum_{l \in \mathcal{N}^i_{\tau+1}}\delta_t^l\\
    &= \sum_{0\le k\le {\tau}+1}\sum_{l\in\mathcal{N}^i_k}\delta_t^l,\\
    \hat{z}_{\tau+1}^{ij}
    &=\sum_{l\in\mathcal{N}^i_{\tau-1 }\cap \mathcal{N}^j_{{\tau}}} \delta_t^l + \sum_{l\in\mathcal{N}^i_{\tau+1}}\delta_t^l - \sum_{l\in\mathcal{N}^j_{\tau}}\delta_t^l\\
    &=\sum_{l\in\mathcal{N}^i_{\tau+1}}\delta_t^l - \sum_{l\in\mathcal{N}^j_{\tau}\cap\mathcal{N}^i_{\tau+1 }}\delta_t^l\\
    &=\sum_{l\in\mathcal{N}^j_{\tau+2}\cap\mathcal{N}^i_{\tau+1 }}\delta_t^l\\
    &=\sum_{l\in\mathcal{N}^i_{\tau+1}\setminus\mathcal{N}^j_{\tau}}\delta_t^l
\end{align*}
Since the set $\mathcal{N}^i_k$ is empty for $k>K$, it holds that $x_\tau^1=x_\tau^2=\dots=x_\tau^N=\sum_{j\in\mathcal{N}} \delta_t^j$ for $\tau\geq K$.

\end{appendices}

\end{document}